\renewcommand*{\top}{%
  {\mathpalette\@transpose{}}%
}
\newcommand*{\@transpose}[2]{%
  \raisebox{\depth}{$\m@th#1\scriptscriptstyle\mathsf{T}$}%
}
\newcommand{\red}[1]{}
\newcommand{\cyan}[1]{{\color{black}{#1}}}
\DeclareMathOperator{\rank}{rank}
\DeclareMathOperator{\diag}{diag}
\DeclareMathOperator{\Diag}{Diag}
\DeclareMathOperator{\Trace}{tr}
\DeclareMathOperator{\ldet}{ldet}
\DeclareMathOperator{\dom}{dom}
\newtheorem{theorem}{Theorem}
\newtheorem{lemma}[theorem]{Lemma}
\title{Recent Advances in Maximum-Entropy Sampling}
\author{Marcia Fampa\footnote{Federal University of Rio de Janeiro}, Jon Lee\footnote{University of Michigan}}
\date{July 2, 2025; revised \today}
\begin{document}

\maketitle

\begin{abstract}
In 2022, we published 
\red{a book,}\cyan{the book} 
\emph{Maximum-Entropy Sampling: Algorithms and Application (Springer)}. Since then,
there have been several notable advancements on this topic.
In this manuscript, we survey some recent highlights.
\end{abstract}


\section*{Introduction}

Let $C$ be a symmetric positive semidefinite matrix with rows/columns
indexed from $N_n:=\{1,2,\ldots,n\}$, with $n >1$.
For $0< s < n$,
we define the \emph{maximum-entropy sampling problem}
\begin{equation}\tag{MESP}\label{MESP}
\begin{array}{ll}
z(C,s):=&\max \left\{\ldet \left(C[S(x),S(x)] \right)~:~ \mathbf{e}^\top x =s,~ x\in\{0,1\}^n\right\},
\end{array}
\end{equation}
where $S(x)$ denotes the support of $x\in\{0,1\}^n$, 
 $C[S,S]$ denotes the principal submatrix indexed by 
 $S$\cyan{, and $\ldet$ denotes the natural logarithm of the determinant}.
 For feasibility, we assume that $\rank(C)\geq s$. 
In the Gaussian case,  $\ldet (C[S,S])$ is
 proportional to the ``differential entropy'' (see \cite{Shannon}) of a vector of random variables 
 having covariance matrix $C[S,S]$. So \ref{MESP} seeks to
 find the ``most informative'' $s$-subvector from an $n$-vector following a joint 
 Gaussian distribution (see \cite{SW}). \ref{MESP} finds application in many areas, for example environmental monitoring.
 Particularly relevant for applications, we
 sometimes also consider CMESP, the \emph{constrained maximum-entropy sampling problem}, which has the additional constraints 
 $Ax\le b$.
 \ref{MESP} is NP-hard, and exact solution 
 of moderate-sized instances is approached by branch-and-bound (B\&B). 
 See \cite{FL2022} for a comprehensive treatment. 

The study of \ref{MESP} and CMESP by researchers in the mathematical-program\-ming community began with \cite{KLQ}, and remains quite active, even since the publication of \cite{FL2022}.
In what follows, we summarize very recent advances on \ref{MESP}, 
 only briefly alluded to or not at all anticipated in \cite{FL2022}. 
 For the sake of brevity, we mainly discuss \ref{MESP}, while many of the results are quite relevant for {\color{blue}CMESP}. 

 \medskip


 \noindent {\bf Notation and some key concepts}

 We let $\mathbb{S}^n_+$ (resp., $\mathbb{S}^n_{++}$) denote the set of
 positive semidefinite (resp., definite) symmetric matrices of order $n$. We let $\Diag(x)$ denote the $n\times n$ diagonal matrix with diagonal elements given by the components of $x\in \mathbb{R}^n$, and $\diag(X)$ denote the $n$-dimensional vector with elements given by the diagonal elements of $X\in\mathbb{R}^{n\times n}$.
 For a symmetric $n\times n$ matrix $U$, let
$\lambda_1(U)\ge\lambda_2(U)\ge\ldots\ge\lambda_n(U)$ denote the non-increasing ordered  eigenvalues of $U$, so
$\lambda_l(U)$ 
denotes the $l^{\rm th}$ greatest eigenvalue of $U$.
We denote  the $i$-th standard unit vector by $\mathbf{e}_i$\thinspace.
We denote an all-ones vector by $\mathbf{e}$.
For matrices $A$ and $B$ with the compatible shapes,
$A\circ B $ is the Hadamard (i.e., element-wise) product, and 
$A\bullet B:=\Trace(A^\top B)$ is the matrix dot-product.

Developing B\&B algorithms for \ref{MESP}, a maximization problem, is particularly interesting because there are several subtle upper-bounding methods. Next, we summarize a few key ones, all based on solving convex optimization problems.
But before getting to that, we briefly indicate two important concepts that can be applied to several bounding methods.

When $C$ is invertible (a common situation for many practical instances), it is easy to check that
$
z(C,s)=z(C^{-1},n-s) + \ldet C.
$
\noindent  
So we have a notion of a
\emph{complementary} \ref{MESP} problem 
\begin{align*}\tag{MESP-comp}\label{MESP-comp} 
&  \max \left\{\textstyle \vphantom{\sum_{j\in S}} \ldet C^{-1}[S(x),S(x)] ~:~ \mathbf{e}^\top x =n-s,~ x\in\{0,1\}^n\right\},
\end{align*}
 and \emph{complementary} bounds 
 (i.e., bounds for the
 complementary problem plus $\ldet C$ 
 immediately give us bounds on
 $z(C,s)$.
 Some upper bounds on $z(C,s)$ also shift by  $\ldet C$  under complementing,
 in which case there is no additional value in computing the
 complementary bound.
 
 It is also easy to check that
$
z(C,s)=z(\gamma C,s) - s \log \gamma,
$
\noindent where the \emph{scale factor} $\gamma>0$.
So we have a notion of a
\emph{scaled} \ref{MESP} problem defined by the data
 $\gamma C$, $s$, and \emph{scaled} bounds 
 (i.e., bounds for the
 scaled problem minus $s \log \gamma$) 
 immediately give us bounds on
 $z(C,s)$.
 Some upper bounds on $z(C,s)$ also shift by  $-s \log \gamma$  under scaling,
 in which case there is no additional value in computing the
scaled bound.  But otherwise, it is useful to compute a good or even optimal scale factor, and the difficulty in doing this depends on the bounding method.

For $\gamma>0$,  
the \emph{(scaled) linx  bound} for \ref{MESP}, introduced in
\cite{Kurt_linx},
is the optimal value of the convex optimization problem
\begin{align}\tag{linx}\label{linx}
    \textstyle \frac{1}{2}  
    \max \left\{
	\ldet \strut\!\!\right.&
    \left(\gamma C\Diag(x)C+\Diag(\mathbf{e}-x)\right)-s\log \gamma
	~:~ \\
    & \left. \mathbf{e}^\top x=s, ~ x\in[0,1]^n \right\}.\nonumber
\end{align}
We note that the linx bound is invariant under complementation (see \cite{Kurt_linx}).

For $\gamma>0$,  the
\emph{(scaled) BQP bound} as the optimal value of 
\begin{align}\tag{BQP}\label{BQP}
\textstyle
\max \{\ldet&\left(\gamma C\circ X + \Diag(\mathbf{e}-x) \right) - s\log(\gamma) \, :\\& \, \mathbf{e}^\top x\!=\!s,\,X\mathbf{e}\!=\!sx,\,x\!=\!\Diag(X),\, X\!\succeq\!xx^\top\}.\nonumber
\end{align}
The constraint $X\!\succeq\!xx^\top$ is the well-known convex relaxation of the nonconvex defining equation $X:=xx^\top$.
The \ref{BQP} bound was first developed for \ref{MESP} by \cite{Anstreicher_BQP_entropy}.

Now suppose that the rank of $C$ is $r\geq s$.
We factorize $C=FF^\top$,
with $F\in \mathbb{R}^{n\times k}$, for some $k$ satisfying $r\le k \le n$. Next, we define
\[
f(\Theta,\nu,\tau):= -\sum_{\ell=k-s+1}^k \log\left(\lambda_{\ell} \left(\Theta\right)\right)
+ \nu^\top \mathbf{e} +\tau s - s,
\]
and the \emph{factorization bound}, introduced in \cite{Nikolov}, is the optimal value of the 
convex optimization problem 
\[
\begin{array}{ll}
    & 	\min~ f(\Theta,\nu, \tau)\\
     & \mbox{subject to:}\\
     &\quad \diag(F \Theta F^\top) + \upsilon - \nu   - \tau\mathbf{e}=0,\\
&\quad \Theta\succ 0, ~\upsilon\geq 0, ~\nu\geq 0,
\end{array}
\tag{DFact}\label{DFact}
\]
The careful reader will notice that we do not have a scale factor $\gamma$ for the factorization bound. But this is because it is invariant under scaling.
Additionally, the factorization bound does not depend on which factorization of $C$ is chosen;
(see \cite{ChenFampaLee_Fact} for details).

It is important to note that in practice, the factorization bound is
\emph{not} calculated by directly solving \ref{DFact}.
For practical efficiency, we work in only $n$ variables with its dual, as follows (see \cite{ChenFampaLee_Fact}, for details). 

\begin{lemma}[\protect{\cite[Lemma 13]{Nikolov}}]\label{Ni13}
 Let $\lambda\in\mathbb{R}_+^k$ satisfy $\lambda_1\geq \lambda_2\geq \cdots\geq \lambda_k$\,, define $\lambda_0:=+\infty$, and let $s$ be an integer satisfying
 $0<s\leq k$. Then there exists a unique integer $i$, with $0\leq i< s$, such that
 \begin{equation*}
 \lambda_{i }>\textstyle\frac{1}{s-i }\textstyle\sum_{\ell=i+1}^k \lambda_{\ell}\geq \lambda_{i+1}~.
 \end{equation*}
\end{lemma}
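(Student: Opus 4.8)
The plan is to reformulate the claim in terms of the sequence of ``scaled tail averages'' $a_i := \frac{1}{s-i}\sum_{\ell=i+1}^k \lambda_\ell$, defined for $0 \le i \le s-1$. In this notation the two inequalities to be established read $\lambda_i > a_i \ge \lambda_{i+1}$, so I must produce a unique index $i$ that is simultaneously a strict ``upper'' point (left inequality) and a weak ``lower'' point (right inequality). The whole argument hinges on understanding the monotonicity of $(a_i)$.

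First I would record the one-step recursions relating consecutive terms. From $(s-i-1)\,a_{i+1} = (s-i)\,a_i - \lambda_{i+1}$ one obtains $a_{i+1} - a_i = \frac{a_i - \lambda_{i+1}}{s-i-1}$ for $0 \le i \le s-2$, and analogously $a_i - a_{i-1} = \frac{a_i - \lambda_i}{s-i+1}$ for $1 \le i \le s-1$. Since the denominators are positive, these identities translate the two target inequalities into purely monotonicity statements: the right inequality $a_i \ge \lambda_{i+1}$ is equivalent to $a_{i+1} \ge a_i$ (a non-decreasing step out of $i$), and the left inequality $\lambda_i > a_i$ is equivalent to $a_i < a_{i-1}$ (a strictly decreasing step into $i$). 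The convention $\lambda_0 = +\infty$ makes the left inequality automatic at $i=0$, and nonnegativity of the $\lambda_\ell$ makes the right inequality automatic at $i = s-1$, because there $a_{s-1} = \sum_{\ell=s}^k \lambda_\ell \ge \lambda_s$\thinspace.

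The key structural step, and the crux of the argument, is to show that $(a_i)$ is ``valley-shaped'': strictly decreasing and then non-decreasing, with a single turning point. This follows from a ``once up, always up'' observation: if $a_{i+1} \ge a_i$, then $a_i \ge \lambda_{i+1} \ge \lambda_{i+2}$ gives $a_{i+1} \ge a_i \ge \lambda_{i+2}$, hence $a_{i+2} \ge a_{i+1}$. Thus the non-decreasing steps form a suffix and the strictly decreasing steps a prefix $\{0,1,\ldots,m-1\}$ for a well-defined integer $m$, with $a_0 > a_1 > \cdots > a_m \le a_{m+1} \le \cdots \le a_{s-1}$.

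Finally I would read off both conclusions. For existence, the index $i = m$ has a strictly decreasing step coming in (or $m=0$, where the left inequality is free) and a non-decreasing step going out (or $m=s-1$, where the right inequality is free), so it satisfies both inequalities. For uniqueness, any valid $i$ must, via the translations above, satisfy $i \le m$ (forced by the left inequality) and $i \ge m$ (forced by the right inequality), hence $i = m$. I expect the main obstacle to be establishing the valley structure cleanly while carefully tracking the strict-versus-weak distinction between the two inequalities and correctly disposing of the two boundary indices $i=0$ and $i=s-1$; once the monotonicity dichotomy is in hand, existence and uniqueness are immediate.
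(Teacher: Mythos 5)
The paper states this lemma without proof, importing it verbatim from \cite[Lemma 13]{Nikolov}, so there is no internal argument to compare against; judging your proposal on its own merits, it is correct and complete. I verified the two one-step identities you rely on: from $(s-i)a_i=\sum_{\ell=i+1}^k\lambda_\ell$ one indeed gets $a_{i+1}-a_i=\frac{a_i-\lambda_{i+1}}{s-i-1}$ for $0\le i\le s-2$ and $a_i-a_{i-1}=\frac{a_i-\lambda_i}{s-i+1}$ for $1\le i\le s-1$, and since the denominators are positive these give exactly the sign equivalences you claim (right inequality at $i$ $\Leftrightarrow$ non-decreasing step out of $i$; strict left inequality at $i$ $\Leftrightarrow$ strictly decreasing step into $i$). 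The ``once up, always up'' induction is sound (for $i\le s-3$ one has $a_{i+1}\ge a_i\ge\lambda_{i+1}\ge\lambda_{i+2}$, and $i+2\le s-1<k+1$ so $\lambda_{i+2}$ exists), the boundary indices are disposed of correctly ($\lambda_0=+\infty$ at $i=0$; $a_{s-1}=\sum_{\ell=s}^k\lambda_\ell\ge\lambda_s$ at $i=s-1$, using $s\le k$ and nonnegativity), and the prefix/suffix dichotomy makes both existence at the turning point $m$ and the squeeze $m\le i\le m$ for uniqueness immediate. Degenerate cases ($s=1$; all $\lambda_\ell=0$) are covered by the same mechanism. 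A pleasant byproduct of your framing, worth noting: it identifies the unique index of the lemma as the argmin of the tail averages $a_i$, so the quantity $\hat\delta=\frac{1}{s-\hat\iota}\sum_{\ell=\hat\iota+1}^k\hat\lambda_\ell$ appearing later in the paper (in the construction \eqref{defbetaa} and in $\phi_s$) is precisely the minimal scaled tail average --- a water-filling interpretation that the bare statement of the lemma does not reveal.
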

\noindent Suppose that  $\lambda\in\mathbb{R}^k_+$ with  
$\lambda_1\geq\lambda_2\geq\cdots\geq\lambda_k$~. Let $\hat\imath$ be the unique integer defined by Lemma \ref{Ni13}. We define
\begin{equation}\label{def:phi}
\phi_s(\lambda):=\textstyle\sum_{\ell=1}^{\hat\imath} \log\left(\lambda_\ell\right) + (s - \hat\imath)\log\left(\frac{1}{s-{\hat\imath}} \sum_{\ell=\hat\imath+1}^{k}
\lambda_\ell\right),
\end{equation}
and, for $X\in\mathbb{S}_{+}^k$~, we define the \emph{$\Gamma$-function}
\begin{equation}\label{def:gamma}
\Gamma_s(X):= \phi_s(\lambda(X)).
\end{equation}
The factorization bound is, equivalently, the optimal value of
the convex optimization problem
\begin{align*}\label{prob_ddfact}\tag{DDFact}
\textstyle
&\max \left\{ \Gamma_s(F^\top \Diag(x)F) \, : \, \mathbf{e}^\top x=s,~
x\in[0,1]^n
\right\}. 
\end{align*}
In fact, \ref{prob_ddfact} is equivalent to the Lagrangian dual of \ref{DFact}.


\section{Complexity}


\subsection{Solvable cases}

\cite{ALTHANI2021127,ALTHANI2023120}
gave a dynamic-programming algorithm for
\ref{MESP} when the support graph of $C$ is
a spider with a bounded number of legs. 
A special case is when the support graph is a path, in which case $C$ is a tridiagonal matrix. 

In fact, the starting point for
handling spiders with a bounded number of legs is the case of a path.
The determinant of a symmetric tridiagonal matrix 
can be calculated in linear time, via a simple recursion. 
Let $T_1=(a_1)$, and  for $r\geq 2$, let
\[
T_r:=\left(
  \begin{array}{ccccc}
    a_1 & b_1 &  &  &  \\
    b_1 & a_2 & b_2 &  &  \\
     & b_2 & \ddots & \ddots &  \\
     &  & \ddots & \ddots & b_{r-1} \\
     &  &  & b_{r-1} & a_r \\
  \end{array}
\right).
\]
Defining  $\det T_0:=1$, we have 
 $\det T_r =: a_r \det T_{r-1} - b_{r-1}^2 \det T_{r-2}$,
for  $r\geq 2$. 

\begin{theorem}[\protect{\cite[Theorem 2]{ALTHANI2023120}}]
{\rm\ref{MESP}} is polynomially solvable when $C$ or $C^{-1}$ is tridiagonal, or when there is a symmetric
permutation of $C$ or $C^{-1}$ so that it is tridiagonal. 
\end{theorem}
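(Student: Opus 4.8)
The plan is to reduce all four cases to a single core algorithm for the tridiagonal case, and then to dispatch the inverse and permutation cases via the invariances already recorded in the Introduction.

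First I would treat the base case in which $C$ itself is tridiagonal. Here the support graph is a subgraph of the path $1-2-\cdots-n$, so for any $S\subseteq N_n$ the principal submatrix $C[S,S]$ is block diagonal, its blocks being the maximal runs of consecutive integers contained in $S$. Consequently $\ldet C[S,S]$ decomposes as a sum of $\log\det$ over these blocks (interpreted as $-\infty$ whenever a block is singular). I would first precompute, for every interval $[p,q]\subseteq N_n$, the quantity $D([p,q]):=\log\det C[[p,q],[p,q]]$; fixing $p$ and growing $q$, the recursion $\det T_r=a_r\det T_{r-1}-b_{r-1}^2\det T_{r-2}$ stated above yields each determinant in $O(1)$ from its two predecessors, so all $O(n^2)$ intervals cost $O(n^2)$ in total. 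I would then set up a dynamic program over prefixes: let $V(i,j)$ be the maximum of $\ldet C[S,S]$ over $S\subseteq\{1,\ldots,i\}$ with $|S|=j$. Conditioning on whether $i$ is chosen, and in the affirmative case on the left endpoint $p$ of the maximal block ending at $i$ (so that $p-1\notin S$), gives
\[
V(i,j)=\max\Bigl\{\, V(i-1,j),\ \max_{1\le p\le i}\bigl[\, V(p-2,\, j-(i-p+1))+D([p,i]) \,\bigr] \Bigr\},
\]
with $V(\cdot,0)=0$ and the usual $-\infty$/undefined conventions for infeasible states. There are $O(ns)$ states and $O(n)$ work per state, so $z(C,s)=V(n,s)$ is obtained in polynomial time; the assumption $\rank C\ge s$ guarantees a finite optimum.

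It remains to remove the hypotheses. If instead $C^{-1}$ is tridiagonal, then $C$ is invertible and $C^{-1}$ is positive definite, so I would run the core algorithm on $C^{-1}$ with cardinality $n-s$ and invoke $z(C,s)=z(C^{-1},n-s)+\ldet C$ from the Introduction. For the permutation cases I would use that \ref{MESP} is invariant under symmetric permutation, since for any permutation matrix $P$ the matrix $P^\top C P$ has exactly the same multiset of $s\times s$ principal submatrices as $C$ up to relabeling; hence $z(C,s)=z(P^\top C P,s)$, and the core algorithm applies directly to the tridiagonal matrix $P^\top C P$ (and likewise to a tridiagonal $P^\top C^{-1}P$, combined with complementation). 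Such a $P$ exists precisely when the support graph of $C$ is a disjoint union of paths, a property that is recognizable and whose witnessing ordering is constructible in linear time, so the reduction is effective.

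The main obstacle is the base case itself: recognizing that the objective decomposes over maximal consecutive blocks, and then encoding the cardinality constraint \emph{together} with this block structure in a dynamic program whose state space remains polynomial. Once the interval determinants are made available through the tridiagonal recursion, the remaining reductions are routine consequences of the complementation and permutation invariances.
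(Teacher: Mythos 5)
Your proof is correct and takes essentially the same route as the paper: both rest on the observation that, for tridiagonal $C$, the submatrix $C[S,S]$ is block diagonal with blocks given by the maximal contiguous runs of $S$, so that $\ldet C[S,S]$ decomposes additively over these blocks, and both then run a cardinality-tracking dynamic program with interval determinants precomputed via the tridiagonal recursion. The only substantive difference is bookkeeping: the paper indexes its DP states by the \emph{last piece} $[k,\ell]$ together with the cardinality $t$, giving $\mathcal{O}(n^2 s)$ states with $\mathcal{O}(n^2)$ work per state and hence an $\mathcal{O}(n^5)$ bound, whereas your prefix-indexed table $V(i,j)$ has only $\mathcal{O}(ns)$ states with $\mathcal{O}(n)$ work each, giving $\mathcal{O}(n^2 s) = \mathcal{O}(n^3)$ in the worst case --- a sharper bound, though both are polynomial, which is all the theorem requires. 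You also spell out what the paper compresses into ``without loss of generality'': the complementation identity $z(C,s)=z(C^{-1},n-s)+\ldet C$ for the inverse case, the invariance $z(C,s)=z(P^\top C P,s)$ for the permutation case, and the fact that a tridiagonalizing permutation can be recognized and constructed in linear time (the support graph must be a disjoint union of paths); this last point is a genuine, if small, addition, since the theorem as stated requires an algorithm that works when only the \emph{existence} of such a permutation is promised.
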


\begin{proof} 
Without loss of generality, we may suppose that $C$ is tridiagonal. Let
$S$ be an ordered subset of $N_n$\,. Then we can write 
$C[S,S]$ uniquely as  
$C[S,S]=\Diag(C[S_1,S_1],C[S_2,S_2], \ldots, C[S_p,S_p])$,
with $p\geq 1$,
where each $S_i$ is a \emph{maximal ordered contiguous subset} of $S$, 
and for all $1\leq i<j\leq p$, all elements of $S_i$  are less than all elements of $S_j$\,. 
We refer to the $S_i$ as the \emph{pieces} of $S$, and in particular $S_p$ is the \emph{last piece}. 
It is easy to see
that 
\[
\det C[S,S] = \textstyle  \prod_{i=1}^p \det C[S_i,S_i] = \det C[S_p,S_p] \times \det C[S\setminus S_p, S\setminus S_p].
\]
Every $S$ has a last piece, and for an optimal $S$ to \ref{MESP}, if the last piece is $S_p=:[k,\ell]$,
then we have:
\[
\ldet C[S\setminus [k,\ell],S\setminus [k,\ell]]= z(C[N_{k-2},N_{k-2}], s-(\ell-k+1)).
\]
 So, we define
\begin{align*} 
f(k,\ell,t):=& \max \Big\{
\ldet C[S,S] ~:~ \\
&\qquad\qquad |S|=t,~ S\subset N_n\,, \mbox{ and the last piece of $S$ is } [k,\ell]
\Big\},
\end{align*}
for $1\leq \ell-k+1\leq t \leq s$.
We have that
\[
z(C,s) = \max_{k,\ell} \left\{ 
f(k,\ell,s) ~:~ 1\leq k\leq \ell\leq n,~ \ell-k+1\leq s 
\right\},
\]
where we maximize over the possible (quadratic number of) last pieces. 

Our dynamic-programming recursion is then 
\begin{align*}
f(k,\ell,t) = \ldet C[[k,\ell],[k,\ell]] + &
\max_{i, j} \left\{ 
f(i,j,t-(\ell-k+1)) ~:~ \right. \\
& \left. 1\leq i\ \leq j\leq k-2,~ j-i+1 \leq t-(\ell-k+1)
\right\}.
\end{align*}
To initialize, we calculate
$
f(k,\ell,\ell-k+1)=\ldet C[[k,\ell],[k,\ell]],
$
for $1\leq k \leq \ell \leq n$, $\ell-k+1\leq s$.
We can carry out the initialization in $\mathcal{O}(n^2)$ operations, using the tridiagonal-determinant formula.
Using now the recursion, for $t=1,2,\ldots,s$,
we calculate 
$f(k,\ell,t)$ for all $1\leq k \leq \ell \leq n$ such that $\ell-k+1 <t$.
We can see that this gives an $\mathcal{O}(n^5)$ algorithm for \ref{MESP}, when 
$C$ is tridiagonal. 
\end{proof}

We are interested in the case where the support graph of $C$ is a ``spider'' having,
without loss of generality, $r\geq 3$ legs
on an n-vertex set: for convenience,
we let the vertex set be $N_n$\,, and we let vertex 1 be the \emph{body} of the spider; the non-body vertex set $V_i$ 
of \emph{leg} $i$, is a non-empty 
contiguously numbered subset of $N_n \setminus\{1\}$,
such that distinct $V_i$ do not intersect, and the union of all $V_i$ is $N_n\setminus \{1\}$; 
 we number the legs $i$ in such a way that: (i) the minimum element of $V_1$ is 2, and (ii)
the minimum element of $V_{i+1}$ is 
one plus the maximum element of $V_i$\,, for $i\in[1,r-1]$.

Consider how a \ref{MESP} solution $S$ intersects with the vertices of the spider.
The solution $S$ has pieces. Note how at most one piece
contains  the body, and every other piece is a contiguous set
of vertices of a leg. The number of distinct possible pieces 
containing the body is $\mathcal{O}(n^r)$. And the number
of other pieces is $\mathcal{O}(n^2)$. Overall, we have 
$\mathcal{O}(n^r)$ pieces. In any solution, we can order the pieces
by the minimum vertex in each piece. Based on this, we have a 
well-defined last piece. From this, we can devise an efficient dynamic-programming algorithm, when we consider $r$ to be constant, and we have the following result. 

\begin{theorem}[\protect{\cite[Theorem 3]{ALTHANI2023120}}]
{\rm\ref{MESP}} is polynomially solvable when $G(C)$ or $G(C^{-1})$ is a  spider with a bounded number of legs.
\end{theorem}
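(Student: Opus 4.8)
The plan is to generalize the dynamic-programming argument used above for the tridiagonal (path) case, the new ingredient being a careful treatment of the \emph{body piece}. Exactly as before, I would first reduce the $G(C^{-1})$ case to the $G(C)$ case using the complementarity identity $z(C,s)=z(C^{-1},n-s)+\ldet C$, so that it suffices to treat the case in which the support graph $G(C)$ is a spider with $r\geq 3$ legs, where $r$ is a fixed constant.

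The key structural fact is that $\det C[S,S]$ factorizes across the connected components of the induced subgraph $G(C)[S]$: because $G(C)$ has no edge between distinct components, after a symmetric permutation $C[S,S]$ is block diagonal, so its determinant is the product of the determinants of the blocks. This is precisely the generalization of the piece-factorization exploited in the path case. For a spider, at most one component contains the body (vertex $1$); I call it the body piece $B$, and every other component lies entirely within a single leg and is a contiguous sub-path of that leg (since deleting the body disconnects the legs, and each leg is a path). The body piece is completely described by specifying, for each leg $i$, the number $p_i\in\{0,1,\dots,|V_i|\}$ of vertices in the maximal contiguous run from the base of leg $i$ that lies in $S$ (with $p_i=0$ meaning the base itself is excluded); hence there are only $\mathcal{O}(n^r)$ possible body pieces.

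The main step is a decoupling. I would handle separately the easy case $1\notin S$, in which every component is a leg piece and the $r$ legs become mutually independent tridiagonal problems. Otherwise, I would enumerate the $\mathcal{O}(n^r)$ body pieces $B$ via $(p_1,\dots,p_r)$. Fixing $B$ forces a gap (an excluded vertex) immediately beyond the run on each leg, so the vertices still available split into $r$ pairwise non-adjacent outer sub-paths, one per leg, none adjacent to $B$; on each, selecting further vertices is an independent tridiagonal \ref{MESP} subproblem. Using the path dynamic program established above, I would precompute, for each leg $i$, each outer-sub-path starting index $a$, and each target cardinality $t$, the optimal value $\psi_i(a,t)$. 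For the fixed $B$ (with outer starts $a_i$ determined by $p_i$), I would then combine the legs by a knapsack-style recursion that maximizes $\ldet C[B,B]+\sum_{i=1}^r \psi_i(a_i,t_i)$ subject to $\bigl(1+\sum_i p_i\bigr)+\sum_i t_i=s$, and finally maximize over all $B$. Since $r$ is constant, the $\mathcal{O}(n^r)$ outer enumeration times the polynomial work per body piece yields an overall polynomial running time.

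The main obstacle is computing $\ldet C[B,B]$ efficiently and correctly: unlike a path piece, the support graph $G(C)[B]$ is a branching tree (a small spider), so the simple tridiagonal determinant recursion does not directly apply, and I would need a leaf-to-body elimination recursion for tree-structured determinants (or an incremental update as each $p_i$ grows) to keep this step within budget. The remaining delicate point is purely bookkeeping: verifying that the forced gaps genuinely decouple the legs, and that the combination recursion enumerates each feasible $S$ of cardinality $s$ exactly once across the boundary cases $p_i\in\{0,|V_i|\}$ and $1\notin S$. These checks are routine once the component factorization is in hand.
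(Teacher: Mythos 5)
Your proposal is correct, and it reaches the result by a genuinely different organization than the paper's (sketched) argument. The paper generalizes the path algorithm directly: it orders \emph{all} pieces (connected components of $G(C)[S]$) by their minimum vertex, notes that there are $\mathcal{O}(n^r)$ possible pieces in total ($\mathcal{O}(n^r)$ body-containing ones and $\mathcal{O}(n^2)$ leg segments), and runs a single ``last-piece'' dynamic program over this enlarged piece set, exactly mirroring the tridiagonal proof. You instead enumerate the $\mathcal{O}(n^r)$ body pieces up front (plus the case $1\notin S$), use the forced gap on each leg to decouple the remainder into $r$ independent suffix-path subproblems, solve those with the already-established path DP as a black box, and stitch the legs together with a knapsack over cardinalities. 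Both routes rest on the same two structural facts---the determinant factorizes over components of $G(C)[S]$, and at most one component contains the body and is described by $(p_1,\dots,p_r)$---and both are polynomial for fixed $r$. Yours buys modularity and a correctness argument that is easier to verify (no recursion over pieces interleaved across different legs); the paper's buys one uniform recursion treating paths and spiders alike. Two remarks: the ``main obstacle'' you flag, computing $\ldet C[B,B]$ for the tree-shaped body piece, is a non-issue for mere polynomiality, since any $\mathcal{O}(n^3)$ determinant routine applied to each of the $\mathcal{O}(n^r)$ body pieces stays within budget (your leaf-to-body elimination, which indeed works fill-in-free because a tree admits a perfect elimination ordering, only sharpens the exponent); and ``enumerates each feasible $S$ exactly once'' is stronger than needed---for a maximization it suffices that every enumerated configuration be feasible with correctly decomposed value and that some optimal $S$ be covered, both of which your forced-gap argument already delivers.
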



\subsection{Hardness}

Related but in contrast to spiders having a bounded number of legs, we have stars, which have the maximum number of legs for a spider, but all of which are short. 
In such a case, $C$ is known as an ``arrowhead matrix'' (for a symmetric row/column permutation placing the body first or last).
\ref{MESP} was already established to be NP-hard by \cite{KLQ}, and W[1]-hard 
(a notion in parameterized complexity theory) with respect to $s$ by \cite{KOUTIS20068}. 
\cite{Ohsaka} recently established these same 
conclusions even when the support graph of 
$C$ is a star. Among other things, 
\cite{Ohsaka} also proved 
W[1]-hardness with respect to the rank of $C$. 
\cyan{Finally, \cite{MESP2DOPT}
recently proved that \ref{MESP} is NP-hard even when the covariance matrix 
$C$ is a rank-deficient matrix with all positive eigenvalues equal.}


\section{Computing}


\subsection{General purpose solvers}

For the factorization bound, \cite{ChenFampaLee_Fact}
 demonstrated that it is
invariant under scaling, and it is also independent of the particular factorization chosen.
They gave an iterative variable-fixing methodology
for use \red{heuristically and} within B\&B. 
They also demonstrated that the
factorization bound can be calculated using
a general-purpose nonlinear-programming solver (Knitro, for their experiments).
Finally, they demonstrated that the known ``mixing'' technique 
(see \cite{Mixing}) can be successfully used to 
combine the factorization bound for \ref{MESP} 
with the factorization bound for \ref{MESP-comp}, and also with
the linx bound for \ref{MESP}.


\subsection{Alternating Directions Methods of Multipliers}

\cite{ponte2025admm01doptmesp} developed successful ADMM algorithms
for rapid calculation of the linx bound, the factorization bound, and the BQP bound. In the next three subsections, we summarize those algorithms. 

\subsubsection{ADMM for the linx bound}

We rewrite 
\ref{linx}	as 
\begin{alignat*}{2}\label{prob:admmlinx1}
& \textstyle \frac{1}{2}\min ~ && -(\ldet(Z) -s\log(\gamma))\\
& \mbox{\!\!\quad s.t.} \quad &&-\left(\gamma C \Diag(x) C +  \Diag(\mathbf{e}-x)\right)  + Z = 0,\nonumber\\
&&& \mathbf{e}^\top x = s,\nonumber\\
&&& 
x\in[0,1]^n,~ Z\in\mathbb{S}^n,
\nonumber
\end{alignat*}
and then the associated augmented Lagrangian function  is 
\begin{align*}
\mathcal{L}_\rho(&x,Z,\Psi,\delta):=-\ldet(Z) \!+\! \frac{\rho}{2}\!\left\| -\gamma C \Diag(x) C- \Diag(\mathbf{e}-x)  \!+\! Z \!+\! \Psi    \right\|^2_F \\
& \qquad\qquad \qquad\qquad + \frac{\rho}{2}\!\left(-\mathbf{e}^\top x \!+\! s + \delta  \right)^2 - \frac{\rho}{2}\!\left\|\Psi\right\|_F^2 -  \frac{\rho}{2}\delta^2 +s\log(\gamma),
\end{align*}
where $\rho >0$ is the penalty parameter, and $\Psi \in \mathbb{S}^{n}$, $\delta \in \mathbb{R}$  are the scaled  Lagrangian multipliers. The associated ADMM
algorithm iteratively and successively updates
$x$, $Z$, $\Psi$ and $\delta$.
In this case, (i) the update of $x$ is a
bounded-variable least-squares problem, (ii) the update of
$Z$ has a nice closed form, and (iii) the updates of the dual variables ($\Psi$ and $\delta$) are via simple formulae (as usual).
Convergence of this ADMM algorithm is assured by general convergence theory for ADMM applied to convex problems.


\subsubsection{ADMM for the factorization bound}

We rewrite
 \ref{prob_ddfact} as  
\begin{alignat*}{1} 
& \min \left\{ -\Gamma_s(Z) ~:~
-F^\top\Diag(x)F + Z = 0,~
\mathbf{e}^\top x = s,~
x\in[0,1]^n,  Z\in\mathbb{S}^n\right\},
\end{alignat*}
and then the associated augmented Lagrangian function is 
\begin{align*}
&\mathcal{L}_\rho(x,Z,\Psi,\delta)\!:=
\!-\Gamma_s(Z) \!+\! \frac{\rho}{2}\!\left\|-F^\top\Diag(x)F \!+\! Z \!+\! \Psi  \right\|^2_F \!+\! \frac{\rho}{2}\!\left(-\mathbf{e}^\top x \!+\! s + \delta  \right)^2\\
& \qquad\qquad \qquad\qquad - \frac{\rho}{2}\!\left\|\Psi\right\|_F^2 
- \frac{\rho}{2}\delta^2\,.
\end{align*}
The ADMM updates are as described for the ADMM for linx, except the
update of $Z$, which still has a nice closed form (under some mild technical conditions), and has quite a complicated 
derivation (see \cite{ponte2025admm01doptmesp}).
Again, convergence is assured by convexity. 


\subsubsection{ADMM for the BQP bound}

We rewrite \ref{BQP} as 
\begin{equation*}
\begin{array}{rl}
\min\quad & -\ldet(Z) + s\log(\gamma)\nonumber\\
 \mbox{\!\!\quad s.t.} \quad & -( \tilde C\circ W + I_{n+1}) + Z =0,\nonumber\\
 &  W - E = 0,\label{prob:bqp}\\
 & g_\ell  - G_\ell  \bullet  W = 0, \quad \ell  = 1,\dots,2n + 2,\nonumber\\
& W,Z \in \mathbb{S}^{n+1}, ~ E \in \mathbb{S}^{n+1}_+, \nonumber
\end{array}
\end{equation*}
where $\tilde C := \begin{bmatrix}
    0 ~&~ \mathbf{0}^\top\\
    \mathbf{0} ~&~ \gamma C-I_n
\end{bmatrix}\in \mathbb{S}^{n+1}$, $W := \begin{bmatrix}
    1 ~&~ x^\top\\
    x ~&~ {X}
\end{bmatrix}\in \mathbb{S}^{n+1}$, and $g_\ell  - G_\ell  \bullet  W = 0$, with  $G_\ell \in \mathbb{S}^{n+1}$ and $g_\ell \in\,\mathbb{R}$, includes the constraints    $\Diag(X)\!=\!x$ ($\ell=1,\ldots,n$), $X\mathbf{e}\!=\! sx$ ($\ell=n+1,\ldots,2n$), $\mathbf{e}^\top x\!=\!s$ ($\ell=2n+1$), $W_{11}=1$ ($\ell=2n+2$).
Then, the   associated augmented Lagrangian function is 

\begin{align*}
\mathcal{L}_\rho(&W,E,Z,\Psi,\Phi,\omega):=\!-\ldet(Z) \!+\! \frac{\rho}{2}\!\left\|Z \!-\!\tilde{C}\circ {W} \!- \!I_{n+1}  \!+\! \Psi  \right\|^2_F  \!+\! \frac{\rho}{2}\!\left\|W \!-\!{E}  \!+\! \Phi  \right\|^2_F  \\
&\qquad\quad \!+\! \sum_{\ell =1}^{2n+2 }\!\frac{\rho}{2}\!\left(g_\ell  \!-\! G_\ell \bullet  W  \!+\! \omega_\ell   \right)^2 - \frac{\rho}{2}\!\left\|\Psi\right\|_F^2 - \frac{\rho}{2}\!\left\|\Phi\right\|_F^2 -  \frac{\rho}{2}\|\omega\|_2^2 +s\log(\gamma),
\end{align*}
Because we have three primal variable 
$W \in \mathbb{S}^{n+1}$, $E\in \mathbb{S}^{n+1}_+$ 
and $Z\in \mathbb{S}^{n+1}$,
we develop a 3-block ADMM algorithm.
%
For this, 
\red{we do not have a convergence
 guarantee}\cyan{we cannot directly apply standard results 
(see \cite{3block}, and the references therein)
to guarantee  convergence,} but \cite{ponte2025admm01doptmesp} documented practical convergence. 
Here,  (i) the update of $W$ is accomplished by solving an ordinary least-squares problem, (ii) the update of $E$ has a closed form\red{after 
solving a relatively simple semidefinite optimization problem}, and (iii) 
the update of $Z$ is very similar to the $Z$ update for linx.


\section{Bound-improvement techniques}


\subsection{Masking}

Empirically, one of the best upper bounds for \ref{MESP} is the
\ref{linx} bound. A known general technique that can potentially
improve a bound is \emph{masking}; see \cite{AnstreicherLee_Masked,BurerLee}
and its precursors \cite{HLW,LeeWilliamsILP}.
Masking means applying the
bounding method to $C\!\circ\! M$, where $M$ is any correlation matrix. 
In a precise sense, information is never gained (for any $S\subset N_n$) by masking, but an upper bound for \ref{MESP} may improve. 
\cite{maskinglinx} established that 
the (scaled) \ref{linx} bound can be improved via masking by an amount that is at least  linear in $n$,
even when optimal scaling parameters are employed. 
\cite{maskinglinx} also  extends 
 the result \cyan{of \cite{Mixing}} that the \ref{linx} bound is convex in the logarithm of the scaling parameter and
fully characterizes its behavior and provides an efficient means of calculating its limit as $\gamma$ goes to infinity. 


\subsection{Generalized scaling}

\cite{ChenFampaLeeGenScaling} generalized the scaling technique.
We refer to the technique as ``generalized scaling'' (g-scaling).
In this context, we refer the the original scaling idea as
``ordinary scaling'' (o-scaling), and we refer to a bound subject to ordinary scaling with $\gamma:=1$ 
as ``unscaled''. 
Generalized scaling manifests
differently, depending on the bound, guided by the goal of having the bound be convex in some monotone (coordinate-wise) function of the scaling parameter $\Upsilon\in\mathbb{R}^n_{++}$\,.


\subsubsection{Generalized scaling for the BQP bound}
We define the convex set 
\begin{align*} 
P(n,s):=\!\left\{
(x,X)\in\mathbb{R}^n\! \times \! \mathbb{S}^n \,:\,
 X \! - \!xx^\top\succeq 0,\, \diag(X)=x,\,
\mathbf{e}^\top x=s,\, X\mathbf{e}=sx
\right\}.
\end{align*}
For $\Upsilon \in \mathbb{R}_{++}^n$ and $(x, X)\in P(n,s)$, we define 
   \begin{align*}
   f_{{\mbox{\tiny BQP}}}(x,X;\Upsilon):=&
\textstyle \ldet \left(\strut \left(\Diag(\Upsilon)C\Diag(\Upsilon)\right)\circ X  + \Diag(\mathbf{e}-x)\right)\\
&\quad - \textstyle 2\sum_{i=1}^n  x_i\log \gamma_i \,,
	\end{align*}
\noindent with domain  
\begin{align*}
\dom\left(f_{{\mbox{\tiny BQP}}};\Upsilon\right):=
  &\left\{\strut(x,X) \in\mathbb{R}^n\times \mathbb{S}^n~:~ \right.\\
 & \quad \left.\left(\strut\Diag(\Upsilon)C\Diag(\Upsilon)\right)\circ X  + \Diag(\mathbf{e}-x) \succ 0\right\}. 
\end{align*}
The \emph{g-scaled BQP bound} is defined as
\[
\begin{array}{ll}
	& z_{{\mbox{\tiny BQP}}}(\Upsilon):=\max \left\{f_{{\mbox{\tiny BQP}}}(x,X;\Upsilon) ~:~ (x,X)\in P(n,s) \right\}.\tag{BQP-g}\label{BQP-g} 
	\end{array}
 \] 
We can interpret \ref{BQP-g} as applying the
unscaled \ref{BQP} bound to the symmetrically-scaled 
matrix $\Diag(\Upsilon)C\Diag(\Upsilon)$, and then
correcting by $- 2\sum_{i=1}^n  \!x_i\log \gamma_i$~. 

\begin{theorem}[\protect{\cite[Theorem 1]{ChenFampaLeeGenScaling}}]\label{thm:bqp}
For $\Upsilon \in \mathbb{R}_{++}^n$\,,  
we have:
\begin{itemize}
\item[\ref{thm:bqp}.i.] \label{bqp.i}  $z_{{\mbox{\tiny\rm BQP}}}(\Upsilon)$ is a valid upper bound for the optimal value of {\rm\ref{MESP}}; 
\item[\ref{thm:bqp}.ii.] \label{bqp.ii} the function $f_{{\mbox{\tiny\rm BQP}}}(x,X;\Upsilon)$ is concave in $(x,X)$ on $\dom\left(f_{{\mbox{\tiny\rm BQP}}};\Upsilon\right)$ and continuously differentiable in $(x,X, \Upsilon)$ on
$\dom\left(f_{{\mbox{\tiny\rm BQP}}};\Upsilon\right) \times \mathbb{R}_{++}^n$\,;
\item[\ref{thm:bqp}.iii.] \label{bqp.iii} for fixed $(x,X)\in \dom\left(f_{{\mbox{\tiny\rm BQP}}};\Upsilon\right)$, $f_{{\mbox{\tiny\rm BQP}}}(x,X;\Upsilon)$ is convex in $\log \Upsilon$, and so $z_{{\mbox{\tiny\rm BQP}}}(\Upsilon)$ is convex in $\log \Upsilon$. 
\end{itemize}
\end{theorem}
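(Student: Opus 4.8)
The plan is to dispatch parts \ref{thm:bqp}.i and \ref{thm:bqp}.ii quickly and to concentrate on \ref{thm:bqp}.iii. For \ref{thm:bqp}.i, I would evaluate the objective at the integer points: given a feasible $S\subseteq N_n$ with $|S|=s$ and $\det C[S,S]>0$, take $x$ to be the characteristic vector of $S$ and $X=xx^\top$. A direct check shows $(x,X)\in P(n,s)$, since $X-xx^\top=0\succeq 0$, $\diag(X)=x$ (as $x_i^2=x_i$), $\mathbf{e}^\top x=s$, and $X\mathbf{e}=x(\mathbf{e}^\top x)=sx$. At such a point $\left(\Diag(\Upsilon)C\Diag(\Upsilon)\right)\circ X+\Diag(\mathbf{e}-x)$ is block diagonal, the principal $S$-block having $(i,j)$-entry $\gamma_i\gamma_j C_{ij}$ and the complementary block being the identity; hence its log-determinant is $\ldet C[S,S]+2\sum_{i\in S}\log\gamma_i$, which the correction $-2\sum_i x_i\log\gamma_i$ cancels to $\ldet C[S,S]$. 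As $P(n,s)$ contains all these integer points, $z_{{\mbox{\tiny BQP}}}(\Upsilon)\ge z(C,s)$.

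For \ref{thm:bqp}.ii, concavity in $(x,X)$ is immediate: the argument of $\ldet$ is affine in $(x,X)$ (Hadamard product with a fixed matrix is linear in $X$, and $\Diag(\mathbf{e}-x)$ is affine in $x$), $\ldet$ is concave on $\mathbb{S}^n_{++}$, and $-2\sum_i x_i\log\gamma_i$ is linear in $x$; concavity survives composition with an affine map and the addition of a linear term. Continuous differentiability on $\dom\left(f_{{\mbox{\tiny BQP}}};\Upsilon\right)\times\mathbb{R}^n_{++}$ holds because $\ldet$ is smooth on the open positive-definite cone, the matrix-valued map is polynomial in $(x,X,\Upsilon)$, and each $\log\gamma_i$ is smooth for $\gamma_i>0$.

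The substance is \ref{thm:bqp}.iii. Writing $t:=\log\Upsilon$ coordinate-wise, setting $M:=\Diag(\Upsilon)(C\circ X)\Diag(\Upsilon)$ and $\Delta:=\Diag(\mathbf{e}-x)$, the argument of $\ldet$ becomes $A:=M+\Delta$ and $f_{{\mbox{\tiny BQP}}}=\ldet A-2\sum_k x_k t_k$; since the correction is linear in $t$, the Hessian of $f_{{\mbox{\tiny BQP}}}$ in $t$ equals that of $\ldet A$. I would compute this Hessian by standard matrix calculus: with $G:=A^{-1}$ and $m_k$ the $k$-th column of $M$, one has $\partial A/\partial t_k=\mathbf{e}_k m_k^\top+m_k\mathbf{e}_k^\top$, and differentiating $\tr(G\,\partial A/\partial t_k)$ a second time and simplifying via $GM=I-G\Delta$ (equivalently $m_l^\top G\mathbf{e}_k=\delta_{kl}-(1-x_l)G_{kl}$) collapses the Hessian to the clean form
\[
H_{kl}=4(1-x_k)\,\delta_{kl}\,G_{kk}-4(1-x_k)(1-x_l)\,G_{kl}^2.
\]
The decisive step is to introduce $P:=\Delta^{1/2}G\Delta^{1/2}$, with entries $P_{kl}=\sqrt{(1-x_k)(1-x_l)}\,G_{kl}$, so that $\tfrac14 H=\Diag(\diag P)-P\circ P=(I-P)\circ P$ entrywise. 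It then remains to show $P\preceq I$: since $C\circ X\succeq 0$ by the Schur product theorem (as $C\succeq 0$ and $X\succeq xx^\top\succeq 0$), we get $M\succeq 0$ and hence $A\succeq\Delta$; writing $R:=A^{-1/2}\Delta^{1/2}$ gives $RR^\top=A^{-1/2}\Delta A^{-1/2}=I-A^{-1/2}MA^{-1/2}\preceq I$, and $P=R^\top R$ shares the nonzero eigenvalues of $RR^\top$, so $0\preceq P\preceq I$. A second application of the Schur product theorem, now to $I-P\succeq 0$ and $P\succeq 0$, yields $(I-P)\circ P\succeq 0$, i.e.\ $H\succeq 0$. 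Thus $f_{{\mbox{\tiny BQP}}}(x,X;\cdot)$ is convex in $\log\Upsilon$, and $z_{{\mbox{\tiny BQP}}}(\Upsilon)$, being a pointwise maximum over the ($\Upsilon$-independent) set $P(n,s)$ of functions convex in $\log\Upsilon$, is convex in $\log\Upsilon$.

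The main obstacle is \ref{thm:bqp}.iii, and within it two things: carrying the Hessian computation far enough that the cancellations produce the two-term formula for $H_{kl}$, and recognizing the resulting quadratic form as the Hadamard product $(I-P)\circ P$. Once that structure is seen, positive-semidefiniteness is a short consequence of $A\succeq\Delta$ together with two invocations of the Schur product theorem; everything else is routine.
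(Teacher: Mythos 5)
Your proposal is correct, but there is nothing in this paper to compare it against: the survey states the theorem by citation to \cite[Theorem 1]{ChenFampaLeeGenScaling} and supplies no proof, only attribution remarks (part \emph{i} generalizes the o-scaling result of \cite{Anstreicher_BQP_entropy}, part \emph{ii} is essentially from \cite{Anstreicher_BQP_entropy} with details in \cite[Section 3.6.1]{FL2022}, and part \emph{iii} generalizes the o-scaling result of \cite{Mixing}). Judged on its own, your argument is sound and self-contained. Parts \emph{i} and \emph{ii} are handled the standard way (integer feasible points $(x,xx^\top)$ evaluate to $\ldet C[S,S]$ after the $-2\sum_i x_i\log\gamma_i$ correction; concavity and smoothness via $\ldet$ composed with an affine map). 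For part \emph{iii}, I verified your computation: with $A=\Diag(\Upsilon)(C\circ X)\Diag(\Upsilon)+\Delta$, $\Delta=\Diag(\mathbf{e}-x)$, $G=A^{-1}$, the Hessian in $t=\log\Upsilon$ is indeed $H_{kl}=4(1-x_k)\delta_{kl}G_{kk}-4(1-x_k)(1-x_l)G_{kl}^2=4\left[(I-P)\circ P\right]_{kl}$ with $P=\Delta^{1/2}G\Delta^{1/2}$, and your two invocations of the Schur product theorem (first $C\circ X\succeq 0$, hence $A\succeq\Delta$ and $0\preceq P=R^\top R\preceq I$; then $(I-P)\circ P\succeq 0$) are both valid, so $H\succeq 0$.

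One step you leave implicit and should state: a pointwise positive-semidefinite Hessian yields convexity only on a convex set, so for fixed $(x,X)$ you need the set $\{t\in\mathbb{R}^n : A(e^t)\succ 0\}$ to be convex --- in fact it is all of $\mathbb{R}^n$ or empty. This follows from your own observation that $C\circ X\succeq 0$: the sum $A(\Upsilon)$ of two positive-semidefinite matrices is singular if and only if some nonzero $v\in\ker\Delta$ has $\Diag(\Upsilon)v\in\ker(C\circ X)$, and since $\ker\Delta$ is a coordinate subspace, invariant under positive diagonal scaling, this condition is independent of $\Upsilon$. The same fact justifies the final pointwise-supremum step: each $(x,X)\in P(n,s)$ gives a function of $\log\Upsilon$ that is either convex and finite on all of $\mathbb{R}^n$ or identically $-\infty$, so the supremum defining $z_{{\mbox{\tiny BQP}}}(\Upsilon)$ is convex in $\log\Upsilon$.
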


\noindent 
The special case of Theorem \ref{thm:bqp}.\emph{i} for the case of o-scaling is due to \cite{Anstreicher_BQP_entropy}. 
The concavity in Theorem \ref{thm:bqp}.\emph{ii}
is mainly a result of \cite{Anstreicher_BQP_entropy}, with complete details supplied in \cite[Section 3.6.1]{FL2022}. Theorem \ref{thm:bqp}.\emph{iii}
generalizes a result of \cite{Mixing}, where it is
established only for o-scaling. Theorem \ref{thm:bqp}.\emph{iii} is rather important as it enables the use of quasi-Newton methods for finding the globally-optimal g-scaling vector for the \ref{BQP} bound. 

\subsubsection{Generalized scaling for the linx bound}

 For  $\Upsilon \in \mathbb{R}_{++}^n$ and  $x\in[0,1]^n$,
 we define 
\begin{align*} 
f_{{\mbox{\tiny linx}}}(x;\Upsilon) :=& \textstyle\frac{1}{2}\left(\strut\ldet \left( \Diag(\Upsilon) C\Diag(x)C \Diag(\Upsilon)+\Diag(\mathbf{e}-x) \right)\right) \\ &\quad \textstyle -\sum_{i=1}^n x_i\log \gamma_i \,,
\end{align*}
with  
\begin{align*}
 \dom\left(f_{{\mbox{\tiny linx}}};\Upsilon\right)\!:=\!  \left\{\strut x \in \mathbb{R}^n \!\,:\!\, \Diag(\Upsilon) C\Diag(x)C \Diag(\Upsilon)+\Diag(\mathbf{e}-x) \succ 0 \right\}.
\end{align*}
We then define the \emph{g-scaled linx  bound}  
\begin{equation}\tag{linx-g}\label{linx-g} 
\begin{array}{ll}
	z_{{\mbox{\tiny linx}}}(\Upsilon):=\max \left\{\strut f_{{\mbox{\tiny linx}}}(x;\Upsilon) 
	~:~ \mathbf{e}^{\top}x=s,~ 0\leq x\leq \mathbf{e}\strut \right\}.
\end{array}
\end{equation}
In contrast to \ref{BQP-g},
we cannot interpret \ref{linx-g} as applying the
unscaled linx bound to a symmetrically diagonally scaled $C$.

\begin{theorem}[\protect{\cite[Theorem 2]{ChenFampaLeeGenScaling}}]\label{thm:linx}
For  $\Upsilon \in \mathbb{R}_{++}^n$\,, we have:
\begin{itemize}
\item[\ref{thm:linx}.i.] $z_{{\mbox{\tiny\rm linx}}}(\Upsilon)$ is a valid upper bound for the optimal value of {\rm\ref{MESP}}; 
\item[\ref{thm:linx}.ii.] the function $f_{{\mbox{\tiny\rm linx}}}(x;\Upsilon)$ is concave in $x$ on $\dom\left(f_{{\mbox{\tiny\rm linx}}};\Upsilon\right)$ and continuously differentiable in $(x,\Upsilon)$ on $\dom\left(f_{{\mbox{\tiny\rm linx}}};\Upsilon\right)\times \mathbb{R}^n_{++}$\,;
\item[\ref{thm:linx}.iii.] \label{linx.iii} for fixed $x\in \dom\left(f_{{\mbox{\tiny\rm linx}}};\Upsilon\right)$, $f_{{\mbox{\tiny\rm linx}}}(x;\Upsilon)$ is convex in $\log \Upsilon$, and thus $z_{{\mbox{\tiny\rm linx}}}(\Upsilon)$ is convex in $\log \Upsilon$. 
\end{itemize}
\end{theorem}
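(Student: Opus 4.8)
The plan is to handle the three claims in order, keeping almost all of the effort for part~(iii). For part~(i) I would evaluate $f_{{\mbox{\tiny linx}}}(x;\Upsilon)$ at the characteristic vector of an $s$-subset $S$. Ordering $S$ first, the matrix $\Diag(\Upsilon)C\Diag(x)C\Diag(\Upsilon)+\Diag(\mathbf{e}-x)$ admits a congruence factorization $LL^\top$, where $L$ is block lower-triangular with diagonal blocks $\Diag(\Upsilon_S)\,C[S,S]$ and $I$; hence its determinant is $\prod_{i\in S}\gamma_i^2\cdot(\det C[S,S])^2$. Then $\tfrac12\ldet(\cdot)=\sum_{i\in S}\log\gamma_i+\ldet C[S,S]$, and the correction term $-\sum_i x_i\log\gamma_i=-\sum_{i\in S}\log\gamma_i$ exactly cancels the scaling sum, leaving $f_{{\mbox{\tiny linx}}}(x;\Upsilon)=\ldet C[S,S]$. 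Since the feasible region of \ref{linx-g} contains every such $0/1$ point, $z_{{\mbox{\tiny linx}}}(\Upsilon)\ge z(C,s)$. This is the g-scaling analogue of the known fact that the linx objective reproduces the entropy at integer points; the only new ingredient is tracking the individual $\gamma_i$.

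Part~(ii) is then routine. The matrix $\Phi(x):=\Diag(\Upsilon)C\Diag(x)C\Diag(\Upsilon)+\Diag(\mathbf{e}-x)$ is affine in $x$, $\ldet$ is concave on $\mathbb{S}^n_{++}$, and $-\sum_i x_i\log\gamma_i$ is linear; concavity of $f_{{\mbox{\tiny linx}}}(\cdot;\Upsilon)$ on its convex domain follows by composition of a concave function with an affine map. Joint continuous differentiability on $\dom(f_{{\mbox{\tiny linx}}};\Upsilon)\times\mathbb{R}^n_{++}$ holds because $\Phi$ depends polynomially on $(x,\Upsilon)$, $\ldet$ is smooth on the positive-definite cone, and $\log\gamma_i$ is smooth for $\gamma_i>0$.

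Part~(iii) is the crux. Set $t_i:=\log\gamma_i$ and $M(t):=\Diag(e^t)C\Diag(x)C\Diag(e^t)+\Delta$ with $\Delta:=\Diag(\mathbf{e}-x)$; since $-\sum_i x_i t_i$ is linear, it suffices to prove $t\mapsto\ldet M(t)$ convex. I would compute the second derivative along a direction $v$, writing $V:=\Diag(v)$. From $\tfrac{d}{d\epsilon}\Diag(e^{t+\epsilon v})=V\Diag(e^{t+\epsilon v})$ one obtains $\dot M=V(M-\Delta)+(M-\Delta)V$ and $\ddot M=V\dot M+\dot M V$, hence $\tfrac{d^2}{d\epsilon^2}\ldet M=\tr(M^{-1}\ddot M)-\tr((M^{-1}\dot M)^2)$. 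I expect the main obstacle to be the ensuing algebra: substituting $\Diag(e^t)C\Diag(x)C\Diag(e^t)=M-\Delta$ and using that $V,\Delta$ are diagonal (so commute with each other) makes the cross terms collapse, leaving the clean form $\tr(V^2)-\tr\big((M^{-1}V(M-2\Delta))^2\big)$. Verifying this cancellation carefully is where the real work lies.

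To finish, I would show this quantity is nonnegative via a congruence by $M^{1/2}$. Writing $M-2\Delta=M^{1/2}(I-2G)M^{1/2}$ with $G:=M^{-1/2}\Delta M^{-1/2}$, and noting $M\succeq\Delta\succeq 0$ forces $0\preceq G\preceq I$, the expression rewrites as $4\big(\tr(V^2K)-\tr((VK)^2)\big)$ with $K:=\Delta^{1/2}M^{-1}\Delta^{1/2}$, whose eigenvalues $\lambda_k$ lie in $[0,1]$ by the generalized-eigenvalue identity $\mu=u^\top\Delta u/u^\top Mu\in[0,1]$ for $\Delta u=\mu Mu$. Expanding $K=\sum_k\lambda_k u_k u_k^\top$ in an orthonormal eigenbasis gives $\tr(V^2K)-\tr((VK)^2)=\tfrac12\sum_{k,l}(u_k^\top Vu_l)^2(\lambda_k+\lambda_l-2\lambda_k\lambda_l)\ge 0$, the nonnegativity coming from $\lambda_k+\lambda_l-2\lambda_k\lambda_l=\lambda_k(1-\lambda_l)+\lambda_l(1-\lambda_k)\ge 0$. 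Thus $\ldet M(t)$ is convex in $t$, so each $f_{{\mbox{\tiny linx}}}(x;\cdot)$ is convex in $\log\Upsilon$, and $z_{{\mbox{\tiny linx}}}$, being a pointwise maximum of such functions over an $\Upsilon$-independent feasible set, is convex in $\log\Upsilon$. A minor point to check is that for fixed $x$ the domain $\{t:M(t)\succ0\}$ is $t$-independent (empty or all of $\mathbb{R}^n$), which guarantees the index set of the maximum does not move with $\Upsilon$.
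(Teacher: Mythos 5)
Your proof is essentially correct, but there is nothing in the paper to compare it against: this is a survey, and Theorem~\ref{thm:linx} is stated with a citation to its source, the surrounding text only attributing the o-scaling special cases of parts (i) and (iii) to earlier work. Judged on its own, your argument follows the standard lines of that literature: for (i), block-congruence evaluation at $0/1$ points (your new bookkeeping is that $\tfrac12\ldet$ produces $\sum_{i\in S}\log\gamma_i$, exactly cancelled by the correction term --- correct; one should add that a subset $S$ with $\det C[S,S]=0$ contributes $-\infty$ on both sides, and an optimal $S$ has positive determinant by the rank assumption); for (ii), concavity of $\ldet$ composed with an affine map; for (iii), a direct second-derivative computation.

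I checked the crux of (iii), which you flagged as ``the real work,'' and your claimed identity is right: with $A:=M-\Delta$,
\begin{align*}
\tfrac{d^2}{d\epsilon^2}\ldet M &= 2\tr(M^{-1}V^2A)+2\tr(M^{-1}VAV)-2\tr\bigl((M^{-1}VA)^2\bigr)-2\tr(M^{-1}VAM^{-1}AV)\\
&=4\left[\tr(M^{-1}V^2\Delta)-\tr\bigl((M^{-1}V\Delta)^2\bigr)\right]
=\tr(V^2)-\tr\bigl((M^{-1}V(M-2\Delta))^2\bigr),
\end{align*}
using $V\Delta=\Delta V$ together with cyclicity and transpose-invariance of the trace; and this equals $4\bigl[\tr(V^2K)-\tr((VK)^2)\bigr]$ with $K=\Delta^{1/2}M^{-1}\Delta^{1/2}$. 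Your eigenvalue bound $\lambda(K)\subseteq[0,1]$ and the symmetrization $\tr(V^2K)-\tr((VK)^2)=\tfrac12\sum_{k,l}(u_k^\top Vu_l)^2(\lambda_k+\lambda_l-2\lambda_k\lambda_l)\ge 0$ are both sound. Two points you should make explicit. First, $0\preceq K\preceq I$ needs both $\Delta\succeq 0$ (i.e., $x\le\mathbf{e}$) and $M-\Delta=\Diag(e^t)\,C\Diag(x)C\,\Diag(e^t)\succeq 0$ (i.e., $x\ge 0$), so your argument covers exactly $x\in[0,1]^n$; that is all the theorem needs, since $f_{{\mbox{\tiny linx}}}$ is defined on $[0,1]^n$ and the maximum in \ref{linx-g} is over such $x$. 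Second, your ``minor point'' is true and deserves a line: for $x\in[0,1]^n$, $M(t)\succeq 0$ is singular if and only if some nonzero $w$ satisfies $w\in\ker\Diag(\mathbf{e}-x)$ and $\Diag(e^t)w\in\ker(C\Diag(x)C)$; since $\ker\Diag(\mathbf{e}-x)$ is a coordinate subspace preserved by the positive diagonal matrix $\Diag(e^t)$, this condition does not depend on $t$, so the domain in $t$ is indeed $\emptyset$ or all of $\mathbb{R}^n$, and the pointwise-supremum argument for convexity of $z_{{\mbox{\tiny linx}}}$ in $\log\Upsilon$ goes through.
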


\noindent The special case of  Theorem \ref{thm:linx}.\emph{i} for o-scaling 
was established by \cite{Kurt_linx}.
The concavity in Theorem \ref{thm:linx}.\emph{ii}
is mainly a result of \cite{Kurt_linx}, with further details 
supplied in \cite{FL2022}.
The special case of Theorem \ref{thm:linx}.\emph{iii} for o-scaling was established by \cite{Mixing}. As for the
g-scaled \ref{BQP} bound, the result is rather important as it enables the use of quasi-Newton methods for finding the globally optimal g-scaling for the \ref{linx} bound.


\subsubsection{Generalized scaling for the factorization bound}

For
$\Upsilon \in \mathbb{R}_{++}^n$ and 
$x\in [0,1]^n$, we define  
\begin{align*}
    &F_{{\mbox{\tiny DDFact}}}(x;\Upsilon):
=\textstyle \sum_{i=1}^n \gamma_i x_i F_{i\cdot}^\top F_{i\cdot}\,, \text{ and}\\
&\textstyle f_{{\mbox{\tiny DDFact}}}(x;\Upsilon):= \strut \Gamma_s(F_{{\mbox{\tiny DDFact}}}(x;\Upsilon)) -\sum_{i=1}^n x_i \log \gamma_i\,.
\end{align*}
\noindent We define the \emph{g-scaled factorization bound}
\begin{align}
	z_{{\mbox{\tiny DDFact}}}(\Upsilon):= 	\max \left\{  f_{{\mbox{\tiny DDFact}}}(x;\Upsilon ) \,:\,\mathbf{e}^\top x=s,~
 0\leq x\leq \mathbf{e}\strut \right\}.
\tag{DDFact-g}\label{DDFact-g}
\end{align}
Note that 
$\left(\!\Diag(\sqrt{\Upsilon}) F\!\right)\!\! \left(\!\Diag(\sqrt{\Upsilon}) F\!\right)^{\top\!}$\! is a factorization of $\Diag(\!\sqrt{\Upsilon})C\Diag(\!\sqrt{\Upsilon})$, 
so we can
can interpret \ref{DDFact-g} as applying the
unscaled factorization bound to the symmetrically-scaled  matrix $\Diag(\sqrt{\Upsilon}) C \Diag(\sqrt{\Upsilon})$. 


\cyan{In the following result,
we write
\begin{align*}
    &\dom\left(\Gamma_s\right):=  \left\{X : X\succeq 0, \rank(X)\ge s \right\}, \text{ and }\\
    &\dom\left(f_{{\tiny\mbox{DDFact}}}; \Upsilon\right):=  \left\{x : F_{{\tiny\mbox{DDFact}}}(x;\Upsilon)\in \dom \left(\Gamma_s\right)\right\}
\end{align*}
for the domains of $\Gamma_s(X)$ and $f_{{\tiny\mbox{DDFact}}}(x;\Upsilon)$, respectively. Moreover, we employ 
$\dom\left(f_{{\tiny\mbox{DDFact}}}; \Upsilon\right)_+$
for the intersection of $\dom\left(f_{{\tiny\mbox{DDFact}}}; \Upsilon\right)$ and $\mathbb{R}^n_+$\,.}


\begin{theorem}[see \protect{\cite[Theorem 6]{ChenFampaLeeGenScaling}} for a more detailed statement]\label{thm:fact}
For $\Upsilon \in \mathbb{R}_{++}^n$\, we have:
\begin{itemize}
\item[\ref{thm:fact}.i.] $z_{{\mbox{\tiny\rm DDFact}}} (\Upsilon)$ yields a valid upper bound for the optimal value of {\rm\ref{MESP}};
\item[\ref{thm:fact}.ii.] the function $f_{{\mbox{\tiny\rm DDFact}}}(x;\Upsilon)$  is concave in $x$ on $\dom\left(f_{{\mbox{\tiny\rm DDFact}}}; \Upsilon\right)_+$\,; 
\item[\ref{thm:fact}.iii.] the function $f_{{\mbox{\tiny\rm DDFact}}}(x;\Upsilon)$ is ``generalized differentiable'' with respect to \break $\dom\left(f_{{\mbox{\tiny\rm DDFact}}}; \Upsilon\right)_+$\,;
\item[\ref{thm:fact}.iv.] given $x\in \dom\left(f_{{\mbox{\tiny\rm DDFact}}}; \Upsilon\right)_+$\,, the function $f_{{\mbox{\tiny\rm DDFact}}}(x;\Upsilon)$ is differentiable in $\Upsilon$; 
additionally, if $x:=x^*$, an optimal solution to \ref{prob_ddfact}, then the gradient vanishes at $\Upsilon=\mathbf{e}$;
\item[\ref{thm:fact}.v.]
the function $f_{{\mbox{\tiny DDFact}}}(x;\Upsilon)$ is continuously generalized differentiable in $x$ and continuously differentiable in $\Upsilon$ on $\dom\left(f_{{\mbox{\tiny\rm DDFact}}}; \Upsilon\right)_+ \times \mathbb{R}^n_{++}$\,.
\end{itemize}
\end{theorem}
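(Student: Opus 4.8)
The plan is to reduce everything to known properties of the \emph{unscaled} factorization bound by exploiting the factorization interpretation noted just before the theorem. Writing $\tilde F := \Diag(\sqrt\Upsilon)F$, a one-line computation gives $F_{\mbox{\tiny DDFact}}(x;\Upsilon)=\sum_{i=1}^n\gamma_i x_i F_{i\cdot}^\top F_{i\cdot}=\tilde F^\top\Diag(x)\tilde F$, so that
\[
f_{\mbox{\tiny DDFact}}(x;\Upsilon) = \Gamma_s\!\left(\tilde F^\top\Diag(x)\tilde F\right) - \sum_{i=1}^n x_i\log\gamma_i,
\]
i.e. $f_{\mbox{\tiny DDFact}}(\cdot;\Upsilon)$ is the unscaled \ref{prob_ddfact} objective built from the factor $\tilde F$ of $\tilde C:=\Diag(\sqrt\Upsilon)C\Diag(\sqrt\Upsilon)$, minus the \emph{linear} correction $\sum_i x_i\log\gamma_i$. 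Parts (ii), (iii) and (v) follow by transporting the corresponding facts for the unscaled bound (from \cite{Nikolov,ChenFampaLee_Fact}) across this identity, since adding a linear term preserves concavity, generalized differentiability, and continuity; the real content there is the concavity and nonsmooth spectral analysis of $\Gamma_s$ itself, which I would simply quote.

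For part (i), I would evaluate at an arbitrary $0/1$ feasible point $x$ with support $S$, $|S|=s$. Then $\tilde F^\top\Diag(x)\tilde F=\sum_{i\in S}\tilde F_{i\cdot}^\top\tilde F_{i\cdot}$ has the same nonzero eigenvalues as $\tilde C[S,S]=\Diag(\sqrt{\Upsilon_S})\,C[S,S]\,\Diag(\sqrt{\Upsilon_S})$, and the exactness of $\Gamma_s$ at integer points (which underlies the validity of the unscaled factorization bound) gives $\Gamma_s(\tilde F^\top\Diag(x)\tilde F)=\ldet\tilde C[S,S]=\ldet C[S,S]+\sum_{i\in S}\log\gamma_i$. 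The correction term is exactly $\sum_{i\in S}\log\gamma_i$, so it cancels and $f_{\mbox{\tiny DDFact}}(x;\Upsilon)=\ldet C[S,S]$. Since the feasible region of \ref{DDFact-g} contains every $0/1$ feasible point of \ref{MESP}, this yields $z_{\mbox{\tiny DDFact}}(\Upsilon)\ge z(C,s)$.

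The crux is part (iv), the vanishing of the $\Upsilon$-gradient at the unscaled optimum. Since $F_{\mbox{\tiny DDFact}}(x;\Upsilon)$ is \emph{linear} in $\Upsilon$, and $x$ is fixed in the interior-type domain $\dom(f_{\mbox{\tiny DDFact}};\Upsilon)_+$ where the relevant eigenvalue pattern of $\Gamma_s$ is stable, $f_{\mbox{\tiny DDFact}}(x;\cdot)$ is differentiable and the chain rule gives
\[
\left.\frac{\partial f_{\mbox{\tiny DDFact}}(x;\Upsilon)}{\partial\gamma_j}\right|_{\Upsilon=\mathbf{e}} = x_j\left(\left[F\nabla\Gamma_s(M)F^\top\right]_{jj}-1\right)=:x_j(g_j-1),
\]
where $M:=F^\top\Diag(x)F$ and $g_j=\partial_{x_j}\Gamma_s(F^\top\Diag(x)F)$ is the $j$-th component of the gradient of the \emph{unscaled} objective. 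Two ingredients finish the job. First, the log-homogeneity $\Gamma_s(tX)=s\log t+\Gamma_s(X)$ yields, by Euler's identity, $\nabla\Gamma_s(M)\bullet M=s$, hence $g^\top x=\nabla\Gamma_s(M)\bullet M=s$. Second, at an optimal $x^*$ of \ref{prob_ddfact} the KKT conditions for $\max\{\Gamma_s(F^\top\Diag(x)F):\mathbf{e}^\top x=s,\ 0\le x\le\mathbf{e}\}$ force $g_j$ to equal the simplex multiplier $\tau$ on the fractional support; combined with $g^\top x^*=s=\mathbf{e}^\top x^*$ this pins $\tau=1$, so $g_j=1$ wherever $x_j^*>0$ and every component $x_j^*(g_j-1)$ vanishes. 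I expect the main obstacle — and the reason for the ``more detailed statement'' in \cite{ChenFampaLeeGenScaling} — to be the bookkeeping of box-active variables: components at the upper bound require the interiority hypothesis encoded in $\dom(f_{\mbox{\tiny DDFact}};\Upsilon)_+$ for the pin $\tau=1$ to hold, which I would isolate in a lemma, treating lower-bound-active components trivially since there $x_j^*=0$ already kills the term.
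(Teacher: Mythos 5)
The survey itself contains no proof of Theorem \ref{thm:fact}: it is stated with attribution remarks, and all proofs are deferred to \cite[Theorem 6]{ChenFampaLeeGenScaling}. So your proposal must be judged on its own merits and against the related machinery the survey does reproduce (Lemma \ref{Ni13}, Lemma \ref{lem:optimal_xhat_diagf}, Theorem \ref{dualoptimallcfa}). Much of it is sound: the identity $F_{{\mbox{\tiny DDFact}}}(x;\Upsilon)=\tilde F^\top\Diag(x)\tilde F$ with $\tilde F:=\Diag(\sqrt{\Upsilon})F$ is exactly the interpretation recorded after \ref{DDFact-g}; your proof of part (i) via exactness of $\Gamma_s$ at $0/1$ points, with the correction term cancelling $\sum_{i\in S}\log\gamma_i$, is correct; and concavity in part (ii) does transport across the identity since the correction is linear in $x$. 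A first, attributional-but-also-logical objection concerns parts (iii) and (v): you propose to ``simply quote'' the nonsmooth spectral analysis of $\Gamma_s$ from \cite{Nikolov,ChenFampaLee_Fact}, but the survey states that (iii) is the \emph{first} differentiability result of any type for the factorization bound, so there is nothing in the prior literature to quote --- generalized differentiability of $\Gamma_s$ is precisely the new content of \cite{ChenFampaLeeGenScaling}, and your treatment of it is circular.

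The serious gap is in part (iv). Your gradient formula $\partial f_{{\mbox{\tiny DDFact}}}(x;\Upsilon)/\partial \gamma_j\big|_{\Upsilon=\mathbf{e}}=x_j(g_j-1)$ and the Euler identity $\nabla\Gamma_s(M)\bullet M=s$ (hence $g^\top x=s$) are both correct --- indeed the latter is the same identity used in the proof of Theorem \ref{dualoptimallcfa}. But the KKT step does not deliver the conclusion. Stationarity gives $g_j=\tau$ only on the \emph{fractional} support; a component with $x_j^*=1$ satisfies $g_j=\tau+\mu_j$ with $\mu_j\geq 0$, and Euler then gives $\tau s+\sum_{j:x_j^*=1}\mu_j=s$, which pins $\tau=1$ only if every $\mu_j=0$ --- which is precisely what has to be proved. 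Worse, when $x^*$ is integral (all support components at the upper bound, a perfectly typical outcome, and the very case the $\mathcal{I}_1$-versus-$\mathcal{I}_f$ analysis in Theorem \ref{dualoptimallcfa} is built to handle), there is no fractional anchor at all, and your argument yields nothing. Your proposed repair --- that upper-bound-active components are handled by ``the interiority hypothesis encoded in $\dom(f_{{\mbox{\tiny DDFact}}};\Upsilon)_+$'' --- misreads that notation: the subscript restricts to points where $\Gamma_s$ is well defined and (generalized) differentiable (a nonnegativity/spectral condition), not to points away from the box constraint $x\leq\mathbf{e}$; an optimal $x^*$ with components equal to $1$ lies squarely in that domain.

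What is actually needed, and what is missing from your proposal, is the spectral structure of $\nabla\Gamma_s$ rather than generic convex analysis. For instance, at an integral point with support $S$ and $C[S,S]$ nonsingular, Lemma \ref{Ni13} forces $\hat\lambda_{\hat\iota+1}=\cdots=\hat\lambda_s=\hat\delta$ for the eigenvalues of $M^*=F^\top\Diag(x^*)F$, and then the computation
\[
F[S,:]\,\nabla\Gamma_s(M^*)\,F[S,:]^\top=\textstyle\sum_{\ell\leq\hat\iota}v_\ell v_\ell^\top+\frac{1}{\hat\delta}\left(C[S,S]-\sum_{\ell\leq\hat\iota}\hat\lambda_\ell v_\ell v_\ell^\top\right)=I_s
\]
(with $v_\ell$ the eigenvectors of $C[S,S]$) shows $g_j=1$ for all $j\in S$ as an algebraic identity --- notably without using optimality, which confirms that the KKT-multiplier route is not where the content lies. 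The mixed integral/fractional case requires combining this kind of computation with the optimality-based facts of Lemma \ref{lem:optimal_xhat_diagf}. Since this ingredient is absent, part (iv) --- the genuinely new assertion of the theorem --- is not established by your proposal.
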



\noindent \cite{Nikolov} (also see \cite{Weijun}) established Theorem \ref{thm:fact}.\emph{i} for $\Upsilon:=\mathbf{e}$. \red{We generalize}\cyan{Theorem \ref{thm:fact}.\emph{i} generalizes} this result to the situation where $\Upsilon\in \mathbb{R}^n_{++}$\,. 
\cite{ChenFampaLee_Fact} showed that the o-scaled factorization bound for (C)\ref{MESP} is
invariant under the scale factor, so the
use of any type of scaling in the context of the factorization bound 
\cyan{was new}\red{is new}.
Theorem \ref{thm:fact}.\emph{ii} is a result of \cite{Nikolov}, with details supplied by \cite[Section 3.4.2]{FL2022}.
Theorem \ref{thm:fact}.\emph{iii} is the first differentiablity result of any type for the 
factorization bound. 
This result helps us to understand the practical success of general-purpose codes (like Knitro) for calculating the factorization bound. 
Theorem \ref{thm:fact}.\emph{iv} provides the potential for fast algorithms leveraging Newton and quasi-Newton based methods to improve the factorization bound by g-scaling. 
We are left with the
open question of whether g-scaling can improve the factorization bound for \ref{MESP} --- we do have experimental evidence that it 
\red{case}\cyan{can}
improve the factorization bound for 
CMESP\cyan{; see \cite[Section 6]{ChenFampaLeeGenScaling}}.
We can interpret the last part of Theorem \ref{thm:fact}.\emph{iv} as a partial result toward a negative answer.
Theorem \ref{thm:fact}.$v$ is a consequence of Theorems \ref{thm:fact}.$iii,iv$.


\subsection{The augmented factorization bound}

\cite{li2025augmentedfactorizationIPCO}
recently 
gave an improvement on the factorization bound, 
for the case in which $C$ is positive definite --- an important special case. 
\cyan{
Considering the function $\phi_s$ defined in \eqref{def:phi},  she defines the \emph{$\Gamma^+$-function} for $X\in\mathbb{S}^k_+$ and $0\leq \kappa\leq \lambda_n(C)$, as
\begin{equation*}
\Gamma_s^+(X;\kappa):= \phi_s(\lambda(X) + \kappa \,
\mathbb{I}_s),
\end{equation*}
where $\mathbb{I}_s\in\mathbb{R}^k$ has the first $s$ elements equal to one and the others equal to zero. 
Then, she defines the  \emph{augmented factorization bound} as the optimal value of the convex optimization problem
\begin{align*}
\max \left\{ \Gamma^+_s(G^\top \Diag(x)G;\kappa) \, : \, \mathbf{e}^\top x=s,~
x\in[0,1]^n
\right\}, 
\end{align*}
where $GG^\top:=C - \kappa I_n$\,, for $0<\kappa\leq \lambda_n(C)$,
with  $G \in \mathbb{R}^{n\times q}$, for some $q$ satisfying $\rank(G)\le q \le n$.

\cite{li2025augmentedfactorizationIPCO} establishes
that this new bound is optimized for 
$\kappa:=\lambda_n(C)$ and dominates the factorization bound
(for this special case in which $C$ is positive definite).
}

\red{
She simply applied the
factorization bound to $C-\lambda I_n$\,, for some $0<\lambda\leq \lambda_{\text{min}}(C)$, and adds $s\lambda$.
\cite{li2025augmentedfactorizationIPCO} establishes
that this new bound is optimized for 
$\lambda:=\lambda_{\text{min}}(C)$ and dominates the factorization bound
(for this special case in which $C$ is positive definite).
}


\section{Cousins of MESP}

The 0/1 D-Optimality problem can be formulated as 
\begin{align*}\label{prob01}\tag{D-Opt(0/1)}
&\max \left\{\ldet(A^\top \Diag(x) A) \, : \, \mathbf{e}^\top x=s,~   x\in\{0,1\}^n\right\},
\end{align*} 
where $A:= (v_1, v_2, \dots,v_n)^\top\in\mathbb{R}^{n\times m}$ has full column rank, and $m\leq s < n$. 
The $v_i\in\mathbb{R}^m$ are potential
``design points'' for some experiments.
Given a budget for $s$ experiments, we
wish to minimize the generalized variance 
of parameter estimates for a linear model
based on the chosen experiments. More details can be found in \cite{PonteFampaLeeMPB}, and the references therein.

The 0/1 
\red{D-optimal}\cyan{D-Optimal} 
Data Fusion problem can be formulated as 
\begin{align*}\label{datafusion}\tag{DDF(0/1)}
&\max \left\{\red{B+}\ldet(\cyan{B+} A^\top \Diag(x) A) \, : \, \mathbf{e}^\top x=s,~   x\in\{0,1\}^n\right\},
\end{align*} 
where $B\in\mathbb{S}_{++}^m$ is an existing Fisher Information Matrix (FIM), $1\leq s < n$, and $A$ is defined as for \ref{prob01}. More details can be found in \cite{li2022d} and the references therein.

The difference between \ref{datafusion} and \ref{prob01} is that in the
\red{latter}\cyan{former} 
we assume the existence of information from previously selected experiments, represented by the existing positive definite FIM (i.e., $B:=\tilde{A}^\top\tilde{A}$, where the rows of $\tilde{A}$ correspond to the  previously selected design points), in addition to the information obtained from experiments corresponding to $n$ potential design points from which $s$ new points should be selected.
\red{(note that in this case we can have $s<m$).}


\subsection{0/1 D-Opt, 0/1 Data Fusion, and 
MESP}\label{sec:equiv}

In the following, we collect recent results from  the literature that address the relationship between \ref{prob01} and \ref{datafusion},  and \ref{MESP}. We demonstrate \emph{practical} reductions, that do not increase the sizes of instances. 

\red{
Let $C\in\mathbb{S}^n_{++}$\,,  $x\in\{0,1\}^n$ with $\mathbf{e}^\top x =s$.
Consider the factorization $\frac{1}{\lambda_n(C)} C - I_n=W^\top W$, 
where $W\in\mathbb{R}^{n\times n}$. 
For example, $W$ could be the matrix square root, or it could be derived from the real Schur decomposition.
\cite{li2022d} observed that 
\begin{equation*}
    \begin{array}{ll}
    \textstyle\ldet C[S(x),S(x)]&=\ldet\left(\frac{1}{\lambda_n(C)} C[S(x),S(x)]\right) - s\log\left(\textstyle\frac{1}{\lambda_n(C)}\right)\\
    &=\ldet((I_n+W \Diag(x)W^\top) - s\log\left(\textstyle\frac{1}{\lambda_n(C)}\right),
    \end{array}
    \end{equation*}
and from that they obtain the following result.
}

\begin{theorem}[\protect{\cite[Theorem 1]{li2022d}}]
Every instance of 
{\rm\ref{MESP}} 
having $C\in\mathbb{S}^n_{++}$ 
can be recast as an instance of 
{\rm\ref{datafusion}}, where the $n$ of the two problems are identical. 
\end{theorem}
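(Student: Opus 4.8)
The plan is to convert the identity already recorded above --- the one \cite{li2022d} obtain by rescaling $C$ by $1/\lambda_n(C)$ and factoring $\tfrac{1}{\lambda_n(C)}C - I_n = W^\top W$ --- directly into a \ref{datafusion} instance, simply by reading the matrices $B$ and $A$ off the right-hand side. First I would start from
\[
\ldet C[S(x),S(x)] = \ldet\!\left(I_n + W\Diag(x)W^\top\right) - s\log\!\left(\tfrac{1}{\lambda_n(C)}\right),
\]
and note that the additive term $-s\log(1/\lambda_n(C)) = s\log\lambda_n(C)$ is a \emph{constant} on the feasible region, since every feasible $x$ satisfies $\mathbf{e}^\top x = s$. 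Hence maximizing the left-hand side over $\{x\in\{0,1\}^n : \mathbf{e}^\top x = s\}$ has exactly the same maximizers as maximizing $\ldet(I_n + W\Diag(x)W^\top)$ over the same set, and the two optimal values differ only by this constant.

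The second step is to recognize $\ldet(I_n + W\Diag(x)W^\top)$ as a data-fusion objective. Setting $A := W^\top$ gives $A^\top\Diag(x)A = W\Diag(x)W^\top$, and setting $B := I_n \in \mathbb{S}^n_{++}$ gives $I_n + W\Diag(x)W^\top = B + A^\top\Diag(x)A$, so that the quantity being maximized is precisely the \ref{datafusion} objective $\ldet(B + A^\top\Diag(x)A)$ for the data $(A,B,s)$. Thus
\[
z(C,s) = z_{\mbox{\tiny DDF}} + s\log\lambda_n(C),
\]
where $z_{\mbox{\tiny DDF}}$ is the optimal value of that \ref{datafusion} instance. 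Because $A$ has $n$ rows --- one per potential design point --- and the selection parameter $s$ is unchanged, the constructed \ref{datafusion} instance has the same $n$ (and the same $s$) as the original \ref{MESP} instance, which is exactly the claim.

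A few routine points would be checked to make this airtight. The factorization $\tfrac{1}{\lambda_n(C)}C - I_n = W^\top W$ exists precisely because $\lambda_n(C)=\lambda_{\min}(C)$, so $\tfrac{1}{\lambda_n(C)}C \succeq I_n$ and the matrix being factored is positive semidefinite (the square root or a Schur-based factor both serve). To match the \ref{datafusion} template in which $A$ has full column rank, I would instead take $W$ of full row rank $m := \rank\!\left(\tfrac{1}{\lambda_n(C)}C - I_n\right)$, so that $A = W^\top \in \mathbb{R}^{n\times m}$ has full column rank and $B := I_m$; nothing changes except $I_m$ replaces $I_n$. The only genuine content --- and the step I would be most careful about --- is the dimension flip: verifying via Sylvester's identity that $\det\!\big(\tfrac{1}{\lambda_n(C)}C[S,S]\big) = \det(I_s + W_{\cdot S}^\top W_{\cdot S}) = \det(I_m + \sum_{i\in S} W_{\cdot i}W_{\cdot i}^\top)$, where $W_{\cdot S}$ collects the columns of $W$ indexed by $S$. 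This is what guarantees that the $x$-dependence is captured exactly by $A^\top\Diag(x)A$ while the $x$-independent remainder is exactly the baseline FIM $B$; conceptually, rescaling by $1/\lambda_n(C)$ and peeling off the identity is precisely what manufactures an ``existing'' FIM $B = I$. Everything else is bookkeeping of the additive constant.
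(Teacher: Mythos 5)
Your proposal is correct and takes essentially the same route as the paper: the paper's ``proof'' is exactly the displayed observation $\ldet C[S(x),S(x)] = \ldet\left(I_n + W\Diag(x)W^\top\right) - s\log\left(1/\lambda_n(C)\right)$, which it reads off as a \ref{datafusion} instance with $A := W^\top$, $B := I_n$, and the same $n$ and $s$. Your added details --- the Sylvester dimension-flip justifying the identity, the constancy of the additive term on the feasible set, and the full-column-rank refinement $A \in \mathbb{R}^{n\times m}$, $B := I_m$ with $m := \rank\left(\tfrac{1}{\lambda_n(C)}C - I_n\right)$ --- merely make explicit what the paper leaves implicit, and if anything tighten the match with the stated \ref{datafusion} template.
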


\cyan{
\begin{proof}
Consider the factorization $\frac{1}{\lambda_n(C)} C - I_n=W^\top W$, 
where $W\in\mathbb{R}^{n\times n}$. 
For example, $W$ could be the matrix square root, or it could be derived from the real Schur decomposition.
Observe that, for
$x\in\{0,1\}^n$ with $\mathbf{e}^\top x =s$, we have
\begin{equation*}
    \begin{array}{ll}
    \textstyle\ldet C[S(x),S(x)]&=\ldet\left(\frac{1}{\lambda_n(C)} C[S(x),S(x)]\right) - s\log\left(\textstyle\frac{1}{\lambda_n(C)}\right)\\
    &=\ldet((I_n+W \Diag(x)W^\top) - s\log\left(\textstyle\frac{1}{\lambda_n(C)}\right),
    \end{array}
    \end{equation*}
and the result follows.
\end{proof}
}

\cyan{
\begin{theorem}[\protect{\cite[Theorem 2]{li2022d}}]
Every instance of 
{\rm\ref{datafusion}}
can be recast as an instance of 
{\rm\ref{MESP}} 
having $C\in\mathbb{S}^n_{++}$\,, where the $n$ of the two problems are identical. 
\end{theorem}

\begin{proof}
Observe that, for
$x\in\{0,1\}^n$ with $\mathbf{e}^\top x =s$, we have
\begin{align*}
\ldet(B + A^\top \Diag(x)A )
=& \ldet(B) +\ldet \left(I_m + B^{\scriptscriptstyle -1/2}A^\top \Diag(x)A
B^{\scriptscriptstyle -1/2} \right)\\
=& \ldet(B) +\ldet \left(I_n + \Diag(x)^{\frac{1}{2}}A  B^{-1}A^\top \Diag(x)^{\frac{1}{2}}\right)\\
=& \ldet(B) + \ldet(C[S(x),S(x)]),
\end{align*}
where $C:=I_n+AB^{-1}A^\top$.
The result follows.
\end{proof}
}

Let $x\in\{0,1\}^n$ with $\mathbf{e}^\top x =s$, and  $T(x):=N_n\setminus S(x)$.
\cite[see Remark 8]{PonteFampaLeeMPB} observed that
    \begin{equation}\label{relMESPDopt1}
    \ldet(A^\top\Diag(x)A)=2\textstyle\sum_{i= 1}^m \log(\Sigma[i,i]) + \ldet((I_n-UU^\top)[T(x),T(x)]),
    \end{equation}
where $A=U\Sigma V^\top$ is the real singular value decomposition of $A$.
From this, they established the following two results.

\begin{theorem}[\protect{\cite[see Remark 8]{PonteFampaLeeMPB}}]
Every instance of 
{\rm\ref{prob01}}
can be recast as an instance of
{\rm\ref{MESP}}, where the $n$ of the two problems are identical.
\end{theorem}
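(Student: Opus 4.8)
The plan is to read the reduction directly off identity \eqref{relMESPDopt1}, which I take as given. Its first summand $2\sum_{i=1}^m \log(\Sigma[i,i])$ does not depend on $x$; call it $K$. Hence maximizing $\ldet(A^\top\Diag(x)A)$ subject to $\mathbf{e}^\top x = s$, $x\in\{0,1\}^n$ is, up to the additive constant $K$, the same as maximizing $\ldet\!\left((I_n-UU^\top)[T(x),T(x)]\right)$ over the same feasible set. So I would set $C:=I_n-UU^\top$, an $n\times n$ matrix, and argue that this expression already is (a constant plus) an \ref{MESP} objective, once the correct subset size is identified.

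Next I would verify that $C$ is a legitimate \ref{MESP} data matrix. Since $A$ has full column rank $m$ and $A=U\Sigma V^\top$ is its real SVD, the columns of $U\in\mathbb{R}^{n\times m}$ are orthonormal and span the column space of $A$; thus $UU^\top$ is the orthogonal projector onto that $m$-dimensional subspace, and $C=I_n-UU^\top$ is the orthogonal projector onto its complement. Consequently $C\in\mathbb{S}^n_+$ with $\rank(C)=n-m$. This is the key structural point: $C$ is positive semidefinite (indeed idempotent) but singular, so the recast lands in genuine \ref{MESP} territory rather than the positive-definite special case.

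Finally I would handle the complementation and feasibility bookkeeping. For $x\in\{0,1\}^n$ with $\mathbf{e}^\top x=s$ we have $|T(x)|=n-s$, and $x\mapsto T(x)$ is a bijection between feasible points of \ref{prob01} and the $(n-s)$-subsets of $N_n$; therefore
\[
\max\left\{\ldet(A^\top\Diag(x)A) : \mathbf{e}^\top x = s,\ x \in \{0,1\}^n\right\} = z(C, n - s) + K.
\]
Feasibility of the \ref{MESP} instance requires $\rank(C)=n-m\geq n-s$, i.e.\ $s\geq m$, which holds by hypothesis, and $0<n-s<n$, which holds whenever $0<s<n$ (the degenerate case $s=n$ selects every point and is trivial). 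Since $C$ is $n\times n$, the ambient dimension is preserved, giving the asserted size-preserving recast. The main obstacle here is conceptual rather than computational: one must recognize that the natural output of \eqref{relMESPDopt1} is a \emph{complementary} \ref{MESP} objective (indexed by $T(x)$ rather than $S(x)$), so the reduction must target subset size $n-s$ and lean on the complementation bijection, and one must confirm that the projector $C$ meets the positive-semidefiniteness and rank requirements that \ref{MESP} imposes.
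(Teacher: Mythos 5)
Your proposal is correct and follows essentially the same route as the paper: both read the reduction off identity \eqref{relMESPDopt1}, take $C := I_n - UU^\top$ (order $n$, rank $n-m$), and search for a maximum log-determinant principal submatrix of order $n-s$. Your additional bookkeeping --- verifying that $C$ is the orthogonal projector onto the complement of the column space of $A$, that $\rank(C) = n-m \geq n-s$ follows from $s \geq m$, and that the $x \mapsto T(x)$ complementation bijection preserves feasibility --- merely fills in details the paper's one-sentence proof leaves implicit.
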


\begin{proof}
 From \eqref{relMESPDopt1}, we see that  \emph{any} instance of \ref{prob01} can be reduced to an instance of \ref{MESP}, where we search for a  maximum (log-)determinant principal submatrix  of  order $n-s$, from
the input positive-semidefinite matrix $I_n-UU^\top$ of order $n$ and rank $n-m$. 
\end{proof}

\begin{theorem}[\protect{\cite[see Remark 8]{PonteFampaLeeMPB}}]
Every instance of 
{\rm\ref{MESP}} 
having all positive eigenvalues identical
can be recast as an instance of
{\rm\ref{prob01}}, where the $n$ of the two problems are identical.
\end{theorem}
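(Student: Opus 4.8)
The plan is to invert the reduction behind the previous theorem, exploiting that the eigenvalue hypothesis makes the relevant matrix an orthogonal projection. First I would use the scaling identity $z(C,s)=z(\gamma C,s)-s\log\gamma$ to normalize the spectrum. Writing $\mu>0$ for the common positive eigenvalue and $r:=\rank(C)$, I set $\gamma:=1/\mu$ and $P:=\tfrac{1}{\mu}C$, so that $z(C,s)=z(P,s)+s\log\mu$. Since $P$ is symmetric with eigenvalues in $\{0,1\}$ (the value $1$ occurring with multiplicity $r$), it is the orthogonal projection onto an $r$-dimensional subspace; in particular $P^2=P$. It therefore suffices to recast $z(P,s)$.

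Next I would realize $P$ in the form $I_n-UU^\top$ required by \eqref{relMESPDopt1}. The complementary projection $I_n-P$ has rank $m:=n-r$, so it factors as $I_n-P=UU^\top$ with $U\in\mathbb{R}^{n\times m}$ having orthonormal columns; equivalently $P=I_n-UU^\top$. I would then take the design matrix $A:=U$, whose thin singular value decomposition is $U\cdot I_m\cdot I_m^\top$, so that every singular value equals $1$ and the constant $2\sum_{i=1}^m\log(\Sigma[i,i])$ in \eqref{relMESPDopt1} vanishes. Applying that identity gives, for every $y\in\{0,1\}^n$, the clean relation $\ldet(U^\top\Diag(y)U)=\ldet(P[T(y),T(y)])$, where $T(y)=N_n\setminus S(y)$.

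Finally I would match the combinatorics by complementation. For $S\subset N_n$ with $|S|=s$, letting $y$ be the indicator vector of $N_n\setminus S$ gives $|S(y)|=n-s$ and $T(y)=S$, whence $\ldet P[S,S]=\ldet(U^\top\Diag(y)U)$. Maximizing over all such $S$ identifies $z(P,s)$ with the optimal value $z_{\text{D-Opt}}$ of the \ref{prob01} instance having design matrix $U\in\mathbb{R}^{n\times m}$ and budget $n-s$. Combining with the scaling step yields $z(C,s)=s\log\mu+z_{\text{D-Opt}}$; the additive constant $s\log\mu$ is independent of the chosen subset, optimal solutions correspond under complementation, and both problems use the same $n$.

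The main thing to verify carefully is that this \ref{prob01} instance is well-posed. One needs $U$ to have full column rank, which is immediate from orthonormality, and the budget to satisfy $m\le n-s\le n$. The upper inequality is trivial, while the lower inequality $n-r\le n-s$ is exactly the standing feasibility assumption $\rank(C)=r\ge s$. The only genuinely degenerate situation is $r=n$ (so $C=\mu I_n$ and $m=0$): here \ref{MESP} is constant and the construction collapses, and I would dispose of it by a one-line remark rather than through the reduction above.
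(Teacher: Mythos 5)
Your proposal is correct and follows essentially the same route as the paper: normalize by the common positive eigenvalue $\mu=\lambda_1(C)$ so that $\frac{1}{\mu}C$ becomes an orthogonal projection, factor the complementary projection as $UU^\top$ with $U^\top U=I_m$, and take $A:=U$ with budget $n-s$, invoking \eqref{relMESPDopt1} with vanishing singular-value term. The only difference is that you spell out the well-posedness checks (full column rank, $m\le n-s$ via $s\le\rank(C)$) and the degenerate case $C=\mu I_n$, which the paper's terse proof leaves implicit.
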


\begin{proof}
The instance of \ref{prob01} seeks to select $n-s$ rows of the input matrix  $A:=U\in\mathbb{R}^{n\times m}$, where $UU^\top= I_n- \textstyle\frac{1}{\lambda_1(C)}C$ and $U^\top U=I_m$ (i.e., $UU^\top$ is the compact spectral decomposition of $I_n-\textstyle\frac{1}{\lambda_1(C)}C$, and all of its nonzero eigenvalues are $1$). 
\end{proof}

\cyan{
In the recent work \cite{MESP2DOPT},
we established that \ref{MESP}
is fully equivalent to a slightly more general version of
\ref{prob01} that subsumes
\ref{datafusion}. Specifically,
the version (which arises, for example, as B\&B subproblems
with respect to \ref{prob01})
is simply \ref{datafusion}
with the relaxed assumption that 
$B\in\mathbb{S}^n_+$\,.
Further in \cite{MESP2DOPT}, we study in detail the behavior of objective-value upper bounds, in the context of
various maps between \ref{MESP}
instances and these more general
\ref{prob01} instances.
}


\subsection{GMESP}

The \emph{generalized maximum-entropy sampling problem}, introduced by \cite{WilliamsPhD,LeeLind2020}, has a 
similar formulation with \ref{MESP}.
It is
\begin{equation}\tag{GMESP}\label{GMESP}
\begin{array}{ll}
&\max \left\{\sum_{\ell=1}^t \log (\lambda_\ell(C[S(x),S(x)])) ~:~ 
\mathbf{e}^\top x =s,~ x\in\{0,1\}^n\vphantom{\sum_{\ell=1}^t }\right\}.
\end{array}
\end{equation}
where 
\ref{GMESP} is a natural generalization of both \ref{MESP} and 
\ref{prob01} (see \cite{LeeLind2020} for details).
In the general case (i.e., not a \ref{MESP} instance and not a \ref{prob01} instance), it is
motivated by a particular selection problem in the context of principal component analysis (PCA); see \cite{GMESParxiv}.

Following the idea of the factorization bound for \ref{MESP}, \cite{SEA_proceedings,GMESParxiv} introduced the first convex-optimization based relaxation for \ref{GMESP}, studied its behavior, compared it to an earlier spectral bound, and demonstrated its use in a B\&B scheme. Empirically, the approach seems to be effective only when $s-t$ is very small. 

In what follows, we work toward presenting a new result from \cite{GMESParxiv} concerning so-called variable fixing. The result, derived for \ref{GMESP}, in the context of the generalized factorization bound, is even new for the special case of \ref{MESP}, in the context of the factorization bound. We will present only statements and proof sketches of the special case of the results for \ref{MESP}. The full proofs and in the greater generality of \ref{GMESP} can be found in \cite{GMESParxiv}. 

First, we recall the principle of \emph{variable fixing} for \ref{MESP}, in the context of the factorization bound.
\begin{theorem}[see \cite{FL2022}] \label{thm:fixFact}
 Let
\begin{itemize}
    \item\!$\mbox{LB}$ be the objective-function value of a feasible solution for {\rm\ref{MESP}},
    \item\!$(\!\hat{\Theta},\hat{\upsilon},\hat{\nu},\hat{\tau}\!)$ be a feasible solution for {\rm\ref{DFact}} with objective-function value~$\hat{\zeta}$.
\end{itemize}
Then, for every optimal solution $x^*$ 
for {\rm\ref{MESP}}, we have:
\[
\begin{array}{ll}
x_j^*=0, ~ \forall ~ j\in N_n \mbox{ such that } \hat{\zeta}-\mbox{LB} < \hat{\upsilon}_j\thinspace,\\
x_j^*=1, ~ \forall ~ j\in N_n \mbox{ such that } \hat{\zeta}-\mbox{LB} < \hat{\nu}_j\thinspace.\\
\end{array}
\]
\end{theorem}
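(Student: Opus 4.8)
The plan is to exhibit, for each coordinate $j$, an upper bound on the \ref{MESP} objective over all integral feasible solutions that set $x_j=1$ (respectively $x_j=0$), and then to play this bound against optimality of $x^*$ and the lower bound $\mbox{LB}$. The whole argument rests on two facts about the factorization bound, both already implicit in the development above.

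First, I would record that \ref{prob_ddfact} is \emph{exact at integral points}: for $x\in\{0,1\}^n$ with $\mathbf{e}^\top x=s$, one has $\Gamma_s(F^\top\Diag(x)F)=\ldet C[S(x),S(x)]$. Indeed, for such $x$ the matrix $F^\top\Diag(x)F=F[S(x),:]^\top F[S(x),:]$ shares its nonzero spectrum with $C[S(x),S(x)]=F[S(x),:]F[S(x),:]^\top$ and has exactly $s$ nonzero eigenvalues generically; applying Lemma \ref{Ni13} with $\hat\imath=s-1$ collapses $\phi_s$ to $\sum_{\ell=1}^s\log\lambda_\ell$, which is precisely $\ldet C[S(x),S(x)]$, and continuity of $\Gamma_s$ absorbs the degenerate ties. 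This is what makes \ref{prob_ddfact} a relaxation of \ref{MESP} in the first place.

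Second -- and this is where the real content sits -- I would use the Lagrangian-duality relationship between \ref{DFact} and \ref{prob_ddfact} noted above. Forming the Lagrangian of \ref{prob_ddfact} that dualizes $\mathbf{e}^\top x=s$ (multiplier $\tau$), $x\le\mathbf{e}$ (multiplier $\nu\ge0$), and $x\ge0$ (multiplier $\upsilon\ge0$), the defining constraint $\diag(F\Theta F^\top)+\upsilon-\nu-\tau\mathbf{e}=0$ of \ref{DFact} is exactly the stationarity condition identifying $\diag(F\hat\Theta F^\top)=\hat\tau\mathbf{e}+\hat\nu-\hat\upsilon$, and since $\diag(F\hat\Theta F^\top)^\top x=\hat\Theta\bullet(F^\top\Diag(x)F)$, a Fenchel--Young inequality for the concave map $\Gamma_s$ gives $\Gamma_s(F^\top\Diag(x)F)-\hat\Theta\bullet(F^\top\Diag(x)F)\le -\sum_{\ell=k-s+1}^k\log\lambda_\ell(\hat\Theta)-s$ for all $x$. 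Combined with exactness and $\mathbf{e}^\top x=s$ (which kills the $\hat\tau$-term), this yields, for every integral feasible $x$,
\[
\ldet C[S(x),S(x)]\;\le\;\hat\zeta-\hat\nu^\top(\mathbf{e}-x)-\hat\upsilon^\top x.
\]
Establishing this master inequality cleanly -- i.e. verifying that $f(\hat\Theta,\hat\nu,\hat\tau)$ dominates the Lagrangian value at every $x$ -- is the one genuinely technical step; it is precisely the conjugacy underlying the stated equivalence of \ref{prob_ddfact} with the dual of \ref{DFact}.

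From the master inequality the conclusion is immediate. Since $\hat\nu,\hat\upsilon\ge0$ and $(\mathbf{e}-x),x\in\{0,1\}^n$, any integral feasible $x$ with $x_j=1$ satisfies $\hat\upsilon^\top x\ge\hat\upsilon_j$ and $\hat\nu^\top(\mathbf{e}-x)\ge0$, whence $\ldet C[S(x),S(x)]\le\hat\zeta-\hat\upsilon_j$; symmetrically, $x_j=0$ forces $\hat\nu^\top(\mathbf{e}-x)\ge\hat\nu_j$ and $\ldet C[S(x),S(x)]\le\hat\zeta-\hat\nu_j$. Now suppose $\hat\zeta-\mbox{LB}<\hat\upsilon_j$, i.e. $\hat\zeta-\hat\upsilon_j<\mbox{LB}$. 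Then every feasible $x$ with $x_j=1$ has objective value strictly below $\mbox{LB}\le z(C,s)$, so no optimal solution can set $x_j=1$; that is, $x_j^*=0$ for every optimal $x^*$. The second assertion follows identically with $\hat\nu_j$ in place of $\hat\upsilon_j$. I expect the main obstacle to be the careful statement and verification of the Fenchel--Young step for $\Gamma_s$; the exactness-at-integral-points fact and the final contradiction argument are routine bookkeeping.
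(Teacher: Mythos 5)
Your proof is correct, and it is essentially the same argument as the one behind the cited result: the paper states Theorem \ref{thm:fixFact} without proof (deferring to \cite{FL2022}), and the proof there is exactly this weak-duality/reduced-cost computation --- your master inequality $\ldet C[S(x),S(x)]\le\hat\zeta-\hat\nu^\top(\mathbf{e}-x)-\hat\upsilon^\top x$ follows from the conjugacy inequality $\Gamma_s(X)\le\hat\Theta\bullet X-\sum_{\ell=k-s+1}^k\log\lambda_\ell(\hat\Theta)-s$ combined with the dual constraint $\diag(F\hat\Theta F^\top)=\hat\nu+\hat\tau\mathbf{e}-\hat\upsilon$ and $\mathbf{e}^\top x=s$, which is precisely the weak duality underlying the stated equivalence of \ref{prob_ddfact} with the Lagrangian dual of \ref{DFact}. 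One minor simplification: at integral $x$ you only need the inequality $\ldet C[S(x),S(x)]\le\Gamma_s(F^\top\Diag(x)F)$, not exactness, which harmlessly covers the degenerate case where $C[S(x),S(x)]$ is singular and the left side is $-\infty$.
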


We note that to apply the variable-fixing procedure described in Theorem \ref{thm:fixFact} in a  B\&B algorithm to solve \ref{MESP}, we need a feasible solution for \ref{DFact}. 
\cite{Weijun} showed  how to construct a feasible solution for \ref{DFact} from  a feasible solution $\hat x$ of \ref{prob_ddfact}  with finite objective value, with the goal of producing a small gap.

Considering the spectral
decomposition $F(\hat{x})=\sum_{\ell=1}^{k} \hat \lambda_\ell \hat u_\ell \hat u_\ell^\top\,,$
with $\hat \lambda_1\geq\hat \lambda_2\geq\cdots\geq \hat \lambda_{\hat r}>\hat \lambda_{\hat{r}+1}=\cdots=\hat \lambda_k=0$,  
following \cite{Nikolov}, they  set 
$\hat{\Theta}:=\sum_{\ell=1}^{k} {\hat \beta}_\ell \hat{u}_\ell \hat{u}_\ell^\top$\,,
where
\begin{equation}\label{defbetaa}
\hat{\beta}_\ell:=\left\{
\begin{array}{ll}
        \textstyle 1/\hat{\lambda}_\ell\,,
       &~1\leq \ell\leq \hat{\iota};\\
     1/\hat{\delta},&~\hat{\iota}<\ell\leq \hat{r};\\
     (1+\epsilon)/\hat{\delta},&~\hat{r}<\ell\leq k,
\end{array}\right.
\end{equation}
 for any $\epsilon>0$, where $\hat{\iota}$ is the unique integer defined  in Lemma \ref{Ni13} for $\lambda:=\hat{\lambda}$, and
$
\hat \delta:=\frac{1}{s-\hat \iota}\sum_{\ell=\hat \iota+1}^{k}\hat \lambda_\ell
$\thinspace.
We can verify that
\begin{equation} \label{primaldual}
\textstyle
- \sum_{\ell=1}^s \log(\hat{\beta}_{\ell})=  \sum_{\ell=1}^{\hat{\iota}} \log(\hat{\lambda}_{\ell}) + (s-\hat{\iota})\log(\hat{\delta})= \Gamma_t(F(\hat{x})).
\end{equation}
Then, the minimum duality gap between $\hat x$  in \ref{prob_ddfact} and feasible solutions
of \ref{DFact}  of the form $(\hat\Theta,\upsilon,\nu,\tau)$,
is the optimal value of
\begin{equation}\label{mingapproba}\tag{$G(\hat\Theta)$}
\begin{array}{ll}
\min&~
    \nu^\top \mathbf{e}   +\tau s - s\\
      \mbox{s.t.} 
&     ~ \upsilon - \nu   - \tau\mathbf{e}= - \diag(F \hat \Theta F^\top) ,\\
&~\upsilon\geq 0, ~\nu\geq 0.
\end{array}
\end{equation}
\ref{mingapproba} has a simple closed-form solution.
 To construct it,  consider the permutation $\sigma$ of the indices in $N_n$\,, such that $\diag(F\hat\Theta F^\top)_{\sigma(1)} \geq \dots \geq \diag(F\hat\Theta F ^\top)_{\sigma(n)}$\,.
An optimal solution of \ref{mingapproba} is given by (see \cite{{Weijun},FL2022})
\begin{align*}
    &\tau^* :=  
    \diag(F\hat\Theta F^\top)_{\sigma(s)}\,,\\
    &\nu^*_{\sigma(\ell)} := \begin{cases}
         \diag(F\hat\Theta F^\top)_{\sigma(\ell)} - \tau^*,\quad &\text{for } 1\leq \ell \leq s;\\
         0,&\text{otherwise},
    \end{cases}
\end{align*}
and $\upsilon^*:=\nu^*+\tau^*\mathbf{e} - \diag(F \hat \Theta F^\top)$.

\begin{lemma}[see \cite{GMESParxiv} for a more general version of this result (for GMESP)]\label{lem:optimal_xhat_diagf}
   Let $\hat{x}$ be an optimal  solution of {\rm\ref{prob_ddfact}}. Let  
     $F(\hat x)= F^\top \Diag(\hat x) F =: \sum_{\ell=1}^{k} \hat\lambda_{\ell} \hat u_{\ell} \hat u_{\ell}^{\top}$ be a spectral decomposition of $F(\hat x)$. Let
       $\hat{\Theta}:=\sum_{\ell=1}^{k} {\hat \beta}_\ell \hat{u}_\ell \hat{u}_\ell^\top$\,, where $\hat\beta$ is defined in \eqref{defbetaa}.
Then, for every $i,j\in N_n$\,, 
we have
\begin{enumerate}
    \item[\rm(a)] \label{item1} $\diag(F\hat\Theta F^\top)_{i} \geq \diag(F\hat\Theta F^\top)_{j}\,$, if $\hat{x}_i > \hat{x}_j$\,,
    \item[\rm(b)] \label{item2} $\diag(F\hat\Theta F^\top)_{i} = \diag(F\hat\Theta F^\top)_{j}\,$, if $\hat{x}_i\,, \hat{x}_j\in(0,1)$.
    \end{enumerate}
\end{lemma}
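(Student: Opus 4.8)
The plan is to read off the claimed ordering from the first-order optimality conditions of the concave program \ref{prob_ddfact}, after recognizing $\diag(F\hat\Theta F^\top)$ as a supergradient of its objective at the optimal point $\hat x$. Write $h(x):=\Gamma_s(F^\top\Diag(x)F)$, a concave function of $x$, and recall $F(x)=\sum_{i=1}^n x_i F_{i\cdot}^\top F_{i\cdot}$, so that $\partial F(x)/\partial x_i = F_{i\cdot}^\top F_{i\cdot}$. By the chain rule for spectral functions, any supergradient $G\in\partial\Gamma_s(F(\hat x))$ induces the supergradient $\bigl(F_{1\cdot}GF_{1\cdot}^\top,\ldots,F_{n\cdot}GF_{n\cdot}^\top\bigr)=\diag(FGF^\top)$ of $h$ at $\hat x$. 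The first step is therefore to verify that the constructed $\hat\Theta$ is such a supergradient: in the eigenbasis $\{\hat u_\ell\}$ its eigenvalues $\hat\beta_\ell$ are exactly the partial derivatives of $\phi_s$ at $\lambda(F(\hat x))$ on the active spectrum (namely $1/\hat\lambda_\ell$ on the top-$\hat\imath$ block and $1/\hat\delta$ on the averaged tail, cf. \eqref{defbetaa} and Lemma \ref{Ni13}), together with a value $\ge 1/\hat\delta$ in the kernel directions; this matches the superdifferential of the spectral function $\Gamma_s$ at the rank-deficient matrix $F(\hat x)$. Hence $g:=\diag(F\hat\Theta F^\top)\in\partial h(\hat x)$.

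Next I would establish the threshold structure. Because $\hat x$ maximizes the concave $h$ over the polytope $\{\mathbf{e}^\top x = s,\ 0\le x\le\mathbf{e}\}$, a transfer of mass along $\mathbf{e}_i-\mathbf{e}_j$ is feasible whenever $\hat x_i<1$ and $\hat x_j>0$, and optimality forces the corresponding one-sided directional derivative to be nonpositive. Reading these inequalities off the supergradient $g$ yields a scalar $\tau^*$ (which will coincide with $\diag(F\hat\Theta F^\top)_{\sigma(s)}$) for which $\hat x_i\in(0,1)\Rightarrow g_i=\tau^*$, $\hat x_i=1\Rightarrow g_i\ge\tau^*$, and $\hat x_i=0\Rightarrow g_i\le\tau^*$. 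Claims (a) and (b) then follow by a short case analysis: two fractional coordinates both give $g=\tau^*$, yielding (b); and if $\hat x_i>\hat x_j$ then $g_i\ge\tau^*\ge g_j$ in every case, yielding (a).

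The delicate point --- and the place where the construction \eqref{defbetaa} must be handled with care --- is the nonsmoothness of $\Gamma_s$ where $F(\hat x)$ is rank deficient, so that $\partial h(\hat x)$ is a genuine set rather than a singleton. For a coordinate with $\hat x_i>0$ this causes no trouble: if $v\in\operatorname{ker}F(\hat x)$ then $0=v^\top F(\hat x)v=\sum_m \hat x_m (F_{m\cdot}v)^2$ forces $F_{i\cdot}v=0$, so $F_{i\cdot}^\top\in\operatorname{range}(F(\hat x))$, the kernel directions of $\hat\Theta$ do not contribute to $F_{i\cdot}\hat\Theta F_{i\cdot}^\top$, and $g_i$ equals the intrinsic value dictated by $\phi_s'$; this is what makes the equalities for fractional coordinates in (b) exact. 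The residual difficulty is (a) in the case $\hat x_j=0$ with $F_{j\cdot}^\top\notin\operatorname{range}(F(\hat x))$, where the kernel value in $\hat\Theta$ does enter $g_j$. Here I would bound the relevant quantity by the one-sided derivative $\Gamma_s'(F(\hat x);F_{j\cdot}^\top F_{j\cdot})=\min_{G\in\partial\Gamma_s(F(\hat x))}F_{j\cdot}GF_{j\cdot}^\top$, which optimality of $\hat x$ forces to be at most $\tau^*$; this is exactly the inequality needed, and it is attained at the minimal kernel value $1/\hat\delta$ (i.e.\ in the limit $\epsilon\downarrow 0$ of \eqref{defbetaa}). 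Reconciling this one-sided directional derivative with the specific supergradient encoded by $\hat\Theta$ is the main obstacle; everything else is bookkeeping with the eigendecomposition of $F(\hat x)$.
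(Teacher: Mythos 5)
Your overall strategy is the same as the paper's: treat $g:=\diag(F\hat\Theta F^\top)$ as the first-order object for \ref{prob_ddfact} at $\hat x$, impose optimality along the feasible exchange directions $\pm(\mathbf{e}_i-\mathbf{e}_j)$, and read the threshold structure off $g$. The paper does exactly this, except that it obtains the decisive identity --- that the directional derivative of the objective at $\hat x$ along $\tilde x-\hat x$ equals $(\tilde x-\hat x)^\top\diag(F\hat\Theta F^\top)$ \emph{for this specific $\hat\Theta$} --- wholesale by citing \cite[Proposition 11]{ChenFampaLeeGenScaling}, and then proves (a) and (b) by contradiction, constructing a feasible $\tilde x$ that violates the resulting variational inequality. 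Your range-space observation (if $\hat x_i>0$ then $F_{i\cdot}^\top\in\mathrm{range}\,F(\hat x)$, because $0=v^\top F(\hat x)v=\sum_m \hat x_m(F_{m\cdot}v)^2$ for $v\in\ker F(\hat x)$, so the kernel eigenvalues of $\hat\Theta$ do not enter $g_i$) is correct and, combined with the exchange argument, genuinely disposes of part (b) and of part (a) whenever $\hat x_j>0$.

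But your attempt stops precisely at the case you yourself flag as essential --- part (a) with $\hat x_j=0$ and $F_{j\cdot}^\top\notin\mathrm{range}\,F(\hat x)$ --- and ends by naming the reconciliation ``the main obstacle''; that is a genuine gap, and the patch you sketch cannot close it. Optimality of $\hat x$ controls only the one-sided directional derivative, which (as you note, being the minimum over the superdifferential) weights $\ker F(\hat x)$ by $1/\hat\delta$: it bounds the $\epsilon=0$ quantity $g_j^0$. The lemma's $\hat\Theta$ from \eqref{defbetaa} weights the kernel by $(1+\epsilon)/\hat\delta$, so $g_j = g_j^0 + \frac{\epsilon}{\hat\delta}\|P F_{j\cdot}^\top\|_2^2$ (with $P$ the orthogonal projector onto $\ker F(\hat x)$) strictly exceeds what first-order optimality controls, and the inequality needed is for $g_j$, not its $\epsilon\downarrow 0$ limit. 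Indeed no amount of ``bookkeeping'' recovers it from optimality alone: take $F=I_2$ (so $C=I_2$), $s=1$. Then the objective of \ref{prob_ddfact} is identically $0$ on the feasible set, so $\hat x=(1,0)$ is optimal, with $\hat\iota=0$, $\hat\delta=1$, $\hat r=1$, hence $\hat\Theta=\Diag(1,1+\epsilon)$ and $g=(1,1+\epsilon)$ --- here $\hat x_1>\hat x_2$ while $g_1<g_2$, so the conclusion of (a) for the $\epsilon$-perturbed $\hat\Theta$ is simply not a consequence of first-order optimality at $\hat x$, and any complete proof must invoke the additional structure/hypotheses under which \cite[Proposition 11]{ChenFampaLeeGenScaling} asserts the directional-derivative identity for this $\hat\Theta$ (see \cite{GMESParxiv}). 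In short: your route is the paper's route, your diagnosis of where the difficulty sits is exactly right, but the step you leave open is the one nontrivial step, and your proposed $\epsilon\downarrow 0$ remedy proves a different (weaker) inequality than the one stated.
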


\begin{proof}(sketch)
Let $\tilde{x}$ be a feasible solution to the \ref{prob_ddfact}.
From \cite[Proposition 11]{ChenFampaLeeGenScaling}, we have that  the directional derivative of $\Gamma_s$ at $\hat{x}$ in the direction $\tilde{x}-\hat{x}$ exists, and is given by 
    \[
    (\tilde{x}-\hat{x})^\top\textstyle\frac{\partial  \Gamma_s(F(\hat x))}{\partial  x} = (\tilde{x}-\hat{x})^\top\diag(F\hat\Theta F^\top).
    \]
Then,  because  \ref{prob_ddfact} is a convex optimization problem with a concave objective function $\Gamma_s$\,, we  conclude that $\hat{x}$ is an optimal solution to \ref{prob_ddfact} if and only if  
\begin{equation}\label{derivative}(\tilde{x}-\hat{x})^\top\diag(F\hat\Theta F^\top) \leq 0,
\end{equation}
for every feasible solution $\tilde{x}$ to \ref{prob_ddfact}.

It is possible to prove both results (a) and (b) by contradiction, because assuming any of them does not hold for some pair $i,j\in N_n$, we can construct a feasible solution $\tilde{x}$ to \ref{prob_ddfact} that contradicts \eqref{derivative}; see \cite{GMESParxiv} for details. 
\end{proof}

\begin{theorem}[see \cite{GMESParxiv} for a more general version of this result (for GMESP)]\label{dualoptimallcfa}
   Let $\hat{x}$ be an optimal solution of {\rm\ref{prob_ddfact}}. Then,  $(\hat{\Theta},\upsilon^*,\nu^*,\tau^*)$  is an optimal solution  to {\rm\ref{DFact}}. 
\end{theorem}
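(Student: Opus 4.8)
The plan is to certify optimality of $(\hat{\Theta},\upsilon^*,\nu^*,\tau^*)$ for \ref{DFact} in the standard way: show that it is feasible, and show that its objective value equals the common optimal value of \ref{DFact} and \ref{prob_ddfact}. That common value is $\Gamma_s(F(\hat x))$, because \ref{prob_ddfact} is the Lagrangian dual of \ref{DFact}, the two share a finite optimal value by convex strong duality, and $\hat x$ is assumed optimal for \ref{prob_ddfact}. Since the non-logarithmic part of $f$ is exactly the objective of the minimum-gap problem \ref{mingapproba}, matching the values amounts to showing that the optimal value of \ref{mingapproba} at $\hat\Theta$ is zero.

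First I would check feasibility. That $\hat\Theta\succ0$ is immediate from \eqref{defbetaa}: every $\hat\beta_\ell$ is strictly positive, the $\epsilon$-perturbation on the kernel directions being precisely what rules out a zero eigenvalue. The equality constraint of \ref{DFact} holds by the defining relation $\upsilon^*:=\nu^*+\tau^*\mathbf{e}-\diag(F\hat\Theta F^\top)$. Writing $d:=\diag(F\hat\Theta F^\top)$ and recalling that $\sigma$ orders $d$ nonincreasingly, the closed form gives $\nu^*_{\sigma(\ell)}=d_{\sigma(\ell)}-d_{\sigma(s)}\ge0$ for $\ell\le s$ and $\nu^*_{\sigma(\ell)}=0$ otherwise, while $\upsilon^*_{\sigma(\ell)}=0$ for $\ell\le s$ and $\upsilon^*_{\sigma(\ell)}=d_{\sigma(s)}-d_{\sigma(\ell)}\ge0$ for $\ell>s$; hence $\upsilon^*,\nu^*\ge0$.

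Next I would evaluate $f(\hat\Theta,\nu^*,\tau^*)$. Its logarithmic part is minus the sum of the $s$ smallest eigenvalues of $\hat\Theta$; from Lemma \ref{Ni13} and the monotonicity of the $\hat\beta_\ell$, these smallest eigenvalues are $\hat\beta_1,\dots,\hat\beta_{\hat\iota}$ together with $s-\hat\iota$ copies of $1/\hat\delta$, so by \eqref{primaldual} this part equals $\Gamma_s(F(\hat x))$. The remaining part $(\nu^*)^\top\mathbf{e}+\tau^*s-s$ collapses, upon substituting the closed form, to $\sum_{\ell=1}^s d_{\sigma(\ell)}-s$. Thus the entire theorem reduces to the single identity that the sum of the $s$ largest diagonal entries of $F\hat\Theta F^\top$ equals $s$.

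This identity is the crux, and it is exactly what Lemma \ref{lem:optimal_xhat_diagf} is designed to deliver. I would first record the trace identity $\hat x^\top d=\tr(\Diag(\hat x)F\hat\Theta F^\top)=\tr(F(\hat x)\hat\Theta)=\sum_{\ell=1}^k\hat\lambda_\ell\hat\beta_\ell=s$, where the last equality is a direct evaluation over the three regimes of \eqref{defbetaa} (each of the first $\hat\iota$ products is $1$, the middle block sums to $s-\hat\iota$ by the definition of $\hat\delta$, and the kernel terms vanish). It then remains to see that $\hat x$ maximizes the linear function $x\mapsto x^\top d$ over $\{\mathbf{e}^\top x=s,\ 0\le x\le\mathbf{e}\}$, whose optimum is the sum of the $s$ largest entries of $d$. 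Here Lemma \ref{lem:optimal_xhat_diagf} enters: part (a) forces the indices with $\hat x_i=1$ to carry the largest $d$-values and those with $\hat x_i=0$ the smallest, and part (b) makes all fractional indices share one common value; with $p:=|\{i:\hat x_i=1\}|$, a short count shows that there are at least $s-p$ fractional indices (each fractional value is $<1$, yet they sum to $s-p$), so the $s$ largest entries of $d$ are picked up exactly by the support pattern of $\hat x$ and their sum is $\hat x^\top d=s$. The main obstacle is precisely this alignment-and-counting step; the rest is bookkeeping. Combining, $f(\hat\Theta,\nu^*,\tau^*)=\Gamma_s(F(\hat x))$, which by strong duality is the optimal value of \ref{DFact}, so feasibility together with this value certifies that $(\hat\Theta,\upsilon^*,\nu^*,\tau^*)$ is optimal.
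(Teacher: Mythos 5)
Your proposal is correct and follows essentially the same route as the paper's proof: both reduce the claim (via \eqref{primaldual} and duality) to showing $\sum_{\ell=1}^s \diag(F\hat\Theta F^\top)_{\sigma(\ell)} = s$, both establish $\hat{x}^\top \diag(F\hat\Theta F^\top) = F(\hat x)\bullet\hat\Theta = s$ by the trace identity, and both close the gap using Lemma \ref{lem:optimal_xhat_diagf}(a),(b) together with the counting argument over the indices with $\hat x_i = 1$ and the fractional indices. Your additional explicit checks (feasibility of $(\hat\Theta,\upsilon^*,\nu^*,\tau^*)$, the eigenvalue computation behind \eqref{primaldual}, and the evaluation of $\sum_\ell \hat\lambda_\ell\hat\beta_\ell$) are details the paper leaves implicit or delegates to \cite{GMESParxiv}, not a different method.
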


\begin{proof}(sketch) 
Considering \eqref{primaldual}, it suffices to prove that the objective value of \ref{mingapproba} at $(\upsilon^*,\nu^*,\tau^*)$ is zero,  that is
$ {\nu^*}^\top \mathbf{e}    +\tau^* s - s\! =\! 0$. 
 So, it suffices to show~that
\begin{equation}\label{toprovea}
    \textstyle\sum_{\ell=1}^s  \diag(F \hat\Theta F^\top)_{\sigma(\ell)} 
      = s.
\end{equation}
    We can verify (see \cite{GMESParxiv} for details) that 
    $\hat{x}^\top \diag(F \hat\Theta F^\top)= F(\hat x)\bullet \hat \Theta= s$. Then, to show \eqref{toprovea}, it suffices to show that 
        \begin{equation}\label{toprove}
       \textstyle\sum_{\ell=1}^s  \diag(F \hat\Theta F^\top)_{\sigma(\ell)} 
     = \hat{x}^\top \diag(F \hat\Theta F^\top).
       \end{equation}
  If   $\hat x \in \{0,1\}^n$,  then \eqref{toprove}  follows directly from Lemma \ref{lem:optimal_xhat_diagf}, part (a),  and the ordering defined by $\sigma$.  

  Next, suppose that $\hat{x} \notin \{0,1\}^n$. Let  
       $\mathcal{I}_1:=\{i\in  N: \hat{x}_{i}=1\}$ and $\mathcal{I}_f:=\{i\in N: \hat{x}_{i}\in(0,1)\}$. Note that $\sum_{i\in\mathcal{I}_f}\hat{x}_i=s-|\mathcal{I}_1|$. 
    Let $\hat{d}:= \diag(F\hat{\Theta}F^\top)_{i}$\,, for every $i \in \mathcal{I}_f$ (this is well defined, due to  Lemma \ref{lem:optimal_xhat_diagf}, part (b)).
    Then, 
   \[
   \hat{x}^\top \diag(F \hat\Theta F^\top)
        =  \textstyle\sum_{i \in \mathcal{I}_1}  \diag(F \hat\Theta F^\top)_{i} + (s-|\mathcal{I}_1|) \hat{d}\,.
\]
 Note that  $|\mathcal{I}_f|>s-|\mathcal{I}_1|$. Furthermore,   from Lemma \ref{lem:optimal_xhat_diagf}, part (a), we see  that $\diag(F\hat{\Theta}F^\top)_i \geq \hat{d}$ for all $i \in \mathcal{I}_1$\,. Then, we also have
   \begin{align*}
          &\textstyle\sum_{\ell=1}^s  \diag(F \hat\Theta F^\top)_{\sigma(\ell)}= \textstyle\sum_{i \in \mathcal{I}_1} \diag(F \hat\Theta F^\top)_{i} + (s-|\mathcal{I}_1|)\hat{d}~.
       \end{align*}   
       The result follows.
\end{proof}
 

\section{On the horizon} 

We have recently initiated some new work on \emph{the maximum-entropy sampling clustering problem},
which is \ref{MESP} with a particular combinatorial constraint.
The constraint is defined via an undirected graph on vertex set $N_n$\,.
Instead of just looking for an $s$-subset of $N_n$ having maximum entropy, we
have to be sure that the subgraph of $G$ induced by $S$ is complete.
If $C=I_n$\,, then the problem is to find an $s$-clique of $G$, already NP-hard.
So this is a nice merge of the $s$-\emph{clique problem} and \ref{MESP}.
Because many graph problems are naturally attacked by lifting with edge variables, a natural approach that we are pursuing is a bound
for the maximum-entropy clustering problem based on the \ref{BQP}
bound for \ref{MESP}.

The \emph{maximum-entropy remote sampling problem (MERSP)}
was studied a quarter of a century ago by \cite{AFLW_Remote}. 
Motivated by all of the progress on \ref{MESP} since then,
we are presently developing ideas for MERSP that 
reflect the current algorithmic state-of-the-art for \ref{MESP}.


\section*{Acknowledgments} 
M. Fampa was supported in part by CNPq 
grant\red{s 305444/2019-0 and 434683/2018-3.}\cyan{ 307167/2022-4.}  
J. Lee was supported in part by AFOSR grant FA9550-22-1-0172. 

\bibliographystyle{alpha}
\bibliography{MESP_update}
\end{document}